\documentclass[11pt,fleqn]{article}

\usepackage{amssymb,amsmath}
\usepackage{natbib}
\usepackage{vmargin,setspace,verbatim}
\usepackage{graphicx,graphics,color,tikz,tikz-network}
\usepackage{microtype}


\usepackage{libertine}
\usepackage[libertine]{newtxmath}

\definecolor{darkgreen}{rgb}{0,0.2,0}
\definecolor{darkred}{rgb}{0.3,0,0}

\usepackage[colorlinks=true, pdfstartview=FitV, linkcolor=darkgreen, citecolor=darkred, urlcolor=black]{hyperref}

\newcounter{llst}
\newenvironment{abet}{\begin{list}{\rm (\alph{llst})}{\usecounter{llst}
\setlength{\itemindent}{0em} \setlength{\leftmargin}{3em}
\setlength{\labelwidth}{2em} \setlength{\labelsep}{1em}}}{\end{list}}
\newenvironment{numm}{\begin{list}{\rm (\roman{llst})}{\usecounter{llst}
\setlength{\itemindent}{0em} \setlength{\leftmargin}{3.5em}
\setlength{\labelwidth}{2.5em} \setlength{\labelsep}{1em}}}{\end{list}}

\newtheorem{theorem}{Theorem}[section]

\newtheorem{axiom}[theorem]{Axiom}

\newtheorem{corollary}[theorem]{Corollary}

\newtheorem{definition}[theorem]{Definition}
\newtheorem{expl}[theorem]{Example}

\newtheorem{proposition}[theorem]{Proposition}
\newtheorem{remrk}[theorem]{Remark}

\newtheorem{dscrpt}[theorem]{Description}

\newcounter{axiomatiser}

\newcounter{myclaimcount}
\newenvironment{claim}{\stepcounter{myclaimcount} \vspace*{1ex} \noindent \textbf{Claim \Alph{myclaimcount}: } \it}{\vspace*{1ex}}

\newenvironment{proof}[1][Proof]{\noindent \textbf{#1.} }{\hfill
\rule{0.5em}{0.5em}}

\newenvironment{example}{\begin{expl} \rm}{\hfill $\blacklozenge$
\end{expl}}

\newenvironment{remark}{\begin{remrk} \rm}{\hfill $\blacklozenge$
\end{remrk}}

\setpapersize{A4}
\setmarginsrb{80pt}{40pt}{80pt}{60pt}{12pt}{10pt}{12pt}{30pt}

\begin{document}


\title{\textbf{Emergent Collaboration  in Social Purpose Games}\thanks{We thank Stef Tijs, Emiliya Lazarova and Subhadip Chakrabarti for their inspirational leadership in the partially cooperative development of the equilibrium concepts used in this paper.}}

\author{Robert P.~Gilles\thanks{Management School, The Queen's University of Belfast, Riddel Hall, 185 Stranmillis Road, Belfast, BT9~5EE, UK. \textsf{Email: r.gilles@qub.ac.uk}} 
 \and
 Lina Mallozzi \thanks{Department of Mathematics and Applications, University of Naples Federico II, Via Claudio 21, 80125 Naples, Italy. E-mail: \textsf{mallozzi@unina.it}} 
 \and
 Roberta Messalli\thanks{Department of Economic and Statistical Science, University of Naples Federico II, Complesso Monte Sant'Angelo 21, 80125 Naples, Italy. E-mail: \textsf{robertamessalli@gmail.com}}
 }

\date{September 2021}

\maketitle

\begin{abstract}
\singlespace
\noindent
We study a class of non-cooperative aggregative games---denoted as \emph{social purpose games}---in which the payoffs depend separately on a player's own strategy (individual benefits) and on a function of the strategy profile which is common to all players (social benefits) weighted by an individual benefit parameter. This structure allows for an asymmetric assessment of the social benefit across players. 

We show that these games have a potential and we investigate its properties. We investigate the payoff structure and the uniqueness of Nash equilibria and social optima. Furthermore, following the literature on partial cooperation, we investigate the leadership of a single coalition of cooperators while the rest of players act as non-cooperative followers. In particular, we show that social purpose games admit the emergence of a stable coalition of cooperators for the subclass of \emph{strict} social purpose games. Due to the nature of the partial cooperative leadership equilibrium, stable coalitions of cooperators reflect a limited form of farsightedness in their formation.

As a particular application, we study the \emph{tragedy of the commons game}. We show that there emerges a single stable coalition of cooperators to curb the over-exploitation of the resource.

\end{abstract}

\begin{description}
\singlespace
\item[Keywords:] Partial cooperation; Leadership equilibrium;  Potential games;  Aggregative games;  Tragedy of the Commons.

\item[JEL classification:] C71, D71
\end{description}

\thispagestyle{empty}

\newpage

\setcounter{page}{1} \pagenumbering{arabic}

\section{Introduction: Endogenous partial cooperation in games}

In this paper, following the literature of additively separable aggregative games \citep{Corchon1994,Dubey2006} and in line with the asymmetry considered in \citet{McGuinty2013}, we introduce a class of non-cooperative games---denoted as \emph{Social Purpose Games\/}---for which the payoff of each player depends separately on his own strategy and on a function of the strategy profile, the aggregation function, which is the same for all players, weighted by an individual benefit parameter which enlightens the asymmetry, between agents, of public benefit. The two parts of the payoff function represent respectively the individual and the social benefits. 

The purpose of this paper is to investigate the properties of and relationship between Nash equilibria and social optima in social purpose games.   Furthermore, we show that for a subclass of these social purpose games there emerges a stable coalition of cooperators founded on the presence of a limited form of farsightedness in the corresponding definition of stability. 

The main feature of social purpose games is that they capture the tension between social benefits and individual payoffs. As such this class of social purpose games includes implementations of the tragedy of the commons, pollution abatement games, and public good provision games. In most of these social purpose games, individual optimality leads players to underutilise social benefits in favour of direct individualistic payoffs. We show that, indeed, this is a general feature of social purpose games, exemplified by the social suboptimality of the Nash equilibria. If we interpret the strategies in these social purpose games to represent a chosen level of ``effort'', the total Nash equilibrium generated level of effort is socially suboptimal. We argue that partial collaboration through the emergence of a stable coalition of cooperators, partially abates the suboptimality of the resulting effort levels. Therefore, this fundamental insight holds for a substantial class of aggregative games. 

For our analysis we introduce two subclasses of social purpose games. In a \emph{regular} social purpose game, all payoff functions are continuous and (weakly) concave. Regularity of a social purpose game guarantees the existence of the main equilibrium concepts. A social purpose game is \emph{strict} if the payoff structures are strictly convex, guaranteeing uniqueness of the main equilibrium concepts.

\citet{Jensen2010} already pointed out that classes of aggregative games can be characterised through certain forms of potential functions. We show here that social purpose games admit a weighted potential function. This is a substantially stronger property than Jensen showed for certain classes of aggregative games. Furthermore, the existence of a weighted potential leads to the question regarding the relationship between the maximisers of the potential function and the Nash equilibria of these games. We show that for a special class of social purpose games the set of Nash equilibria actually coincides with the set of potential maximisers.

Concerning the Nash equilibria of social purpose games, we show a wide range of properties. Under standard continuity properties, any social purpose game admits at least one Nash equilibrium, including the class of regular social purpose games. For the class of \emph{strict} social purpose games the Nash equilibrium is unique due to the strict convexity of the corresponding payoff structures. 

Similarly, under these same conditions, social purpose games generally admit at least one social optimum, while for the subclass of strict social purpose games the social optimum is unique. The fundamental tension between social and individual benefit leads to the natural conclusion that, as stated above, for a large subclass of social purpose games, the Nash equilibrium levels of effort are socially suboptimal. The next goal of our paper is to investigate how partial cooperation can alleviate tension between socially optimal and equilibrium levels of effort. 

\paragraph{Partial cooperation in social purpose games}

Social purpose games have a specific structure that allows the emergence of stable partial cooperation among players. Our investigation is rooted in the work on partial cooperation in a range of types of non-cooperative games by \citet{Chander1997}, \citet{MallozziTijs2006,MallozziTijs2007,MallozziTijs2009} and \citet{CGL2011,CGL2018}. In this literature, one considers the formation of a single ``coalition of cooperators'' that collectively determines a joint strategy to maximise its collective payoffs---being the sum of the individual members' payoffs. This is akin to cartel formation in oligopolistic market games \citep{Daspremont1983} and international treaty writing in the context of environmental abatement situations \citep{Diamantoudi2006,Kwon2006,Olmstead2014}. There are two natural behavioural hypotheses that can be considered in this context.

First, one can assume that the coalition of cooperators acts as a single player under standard best response rationality. The resulting stable strategy profiles are referred to as \emph{partial cooperative equilibria}, which existence can be established under relatively mild conditions \citep{CGL2011,CGL2018}. This avenue is not investigated in our analysis, since in many applications the coalition of cooperators would not have a standard position in the corresponding decision making processes; in most natural applications, the coalition of cooperators is found to have a position of \emph{leadership} in a decision-making hierarchy.

The hypothesis, that the coalition of cooperators assumes a leadership position as a first mover in relation to the non-cooperating players in the game, was seminally proposed by \citet{Stackelberg1934} and \citet{Chander1997}. This idea has been developed further as the notion of a \emph{partial cooperative leadership equilibrium} (PCLE) in the cited literature on partial cooperation in general non-cooperative games. It is clear that the leadership position of the coalition of cooperators gives it an advantage in comparison to the standard partial cooperative equilibrium payoff. Nevertheless, this leadership position seems natural and has been observed in the context of social purpose games, in particular for public good provision games, pollution abatement games, and (oligopolistic) market games. 

In our study of partial cooperation in social purpose games, we limit ourselves to the investigation of PCLE under the formation of a coalition of cooperators. In particular, we can show that under relatively mild conditions there exists a unique PCLE in a social purpose game. This conception, therefore, allows us the introduction of farsightedness in coalition formation in these games. Namely, we assume that if a coalition of cooperators emerges in a social purpose game, it assumes a leadership position and there naturally emerges a unique PCLE. 

This affords us with the possibility to consider the emergence of a ``stable'' coalition of cooperators. The applied notion of stability in coalition formation is founded on the von Neumann-Morgenstern standards \citep{vNM}, which considers two partial stability properties: (1) \emph{Internal stability}---every cooperator will obtain a lower payoff upon leaving the coalition of cooperators, and (2) \emph{External stability}---every non-cooperator receives a lower payoff upon joining the existing coalition of cooperators. 

We show for a subclass of the strict social purpose games there indeed emerge stable coalitions of cooperators that abate the social suboptimality of the corresponding equilibrium effort levels. 

\paragraph{Relationship to the literature}

An aggregative game is founded on the hypothesis that each payoff function depends on the corresponding player's strategy as well as on some aggregation of all selected strategies. Classical examples of aggregation are the unweighted sum and the mean.  The concept of aggregative games goes back to \citet{Selten1970}, who considers as aggregation function the summation of all the players' strategies. Later, this concept has been studied in the case of other aggregation functions and it has been generalised to the concept of quasi-aggregative games. For this we refer to \citet{Vives1990}, \citet{Corchon1994}, \citet{Cornes2005}, \citet{Dubey2006}, \citet{Miguel2009}, \citet{Jensen2010}, \citet{Jensen2013}, and  \citet{MallozziMessalli2017}. In this literature, there are many games that present an aggregative structure: among them, we mention Cournot and Bertrand games, patent races, models of contests of fighting and model with aggregate demand externalities. 

In additively separable aggregative games, each payoff function is a sum of a function that depends on an aggregation of strategies and a function that depends on player's own strategy. The model of additively separable aggregative games appeared in literature, among others, in the context of International Environmental Agreements (IEA), studying the formation of stable IEA in the case in which each country's choice variable is emission and then extending the results to the dual case, i.e., the case where the choice variable is abatement effort \citep{Diamantoudi2006}.

Also public good provision games are in the context of additively separable aggregative games \citep{BergstromBlumeVarian1986} where each player consumes a certain amount of a private good and donates a certain other amount to the supply of the public good. Thus, the payoff function of each player turns to depend not only on the quantity of private good that he consumes but also on all contributions to the public good made by all individuals. \citet{McGuinty2013} investigate the impact of asymmetry in a voluntary public goods environment by proposing an improved design that explicitly isolates individual incentives, without assuming a dominant strategy. 

\paragraph{Structure of the paper}

In Section 2 of this paper we investigate thoroughly the quintessential social purpose game, namely a standard implementation of the tragedy of the commons \citep{Hardin1968}. We show that the effects of the over-exploitation of the commons can be mitigated by partial cooperation among the players and that such cooperation can be stable. Stable partial cooperation is shown to mitigate the over-exploitation of the commons.

In Section 3 we introduce the class of general social purpose games and identify some relevant sub-classes. We show that social purpose games in general admit a weighted potential and that for a sub-class of these games the potential maximisers coincide with the set of Nash equilibria. We discuss social optima and their properties, showing in particular that for the class of regular social purpose games there is the expected relationship between Nash equilibrium and socially optimal strategies, referring to the over-exploitation of the commons. 

Section 4 introduces the notion of partial cooperation in social purpose games. We define the notion of a partial cooperative leadership equilibrium and prove its existence for regular social purpose games. Subsequently, we introduce a stability concept in the formation of partial cooperation and show that it is natural to expect that strict social purpose games admit the emergence of stable partial cooperation. This is fully developed for an application to quadratic payoff functions.

\section{A motivating case: The tragedy of the commons}

The ``tragedy of the commons'' refers to the classical problem of the overuse and exploitation of a common resource through free-riding in a non-cooperative setting. Traditionally the common resource referred to a common tract of land in a medieval village for the grazing of cattle owned by the village's peasants. If peasants freely access and use the land, a situation of over-exploitation arises, resulting in the depletion of the commons for collective use \citep{Lloyd1833}.\footnote{The tragedy of the commons has significant appeal and application in our contemporary global economy, referring to contemporary issues such as the exploitation of natural resources---including fish stocks, fresh water sources, mining of ores, and oligopolistic commodity markets---as well as the global environmental conditions. The tragedy of the commons is known as one of the most fundamental examples of a \emph{social dilemma}.}

The tragedy of the commons was formulated in game-theoretic terms by \citet{Hardin1968} and has been considered a totemic reference in many contributions to the social and biological sciences \citep{Frischmann2019}. \citet{Hardin1968} only considered the non-cooperative case of unlimited and free extraction from the commons. \citet{Ostrom1990} challenged Hardin's reductionism by pointing out that in many social situations the commons was and still is successfully governed through the application of institutional and behavioural solutions. \citet{SDL2020-JEBO} introduce a mathematical model of an institutional solution on managing the commons from a public good perspective based on Ostrom's insights.

Here we pursue a third perspective on the management of the commons by investigating the endogenous emergence of a coalition of cooperators that collective regulate their extraction from the commons, while non-members of this coalition of cooperators selfishly extract. We can show that the extraction from the commons under such \emph{partial cooperation} significantly improves collective wealth generation and welfare. We illustrate the benefits from endogenous partial cooperation in the classical tragedy of the commons game by considering a simple example.

\paragraph{A non-cooperative extraction game}

Suppose that there is a finite commonly owned resource that has a total size of one (1). There are $n \in \mathbb N$ with $n \geqslant 3$ users of this resource, who are individually free to extract any benefit from the common resource with the understanding that any future benefit from the resource would be limited by the extent of today's usage. Hence, future use is based on the remainder of the common resource at the conclusion of today's collective extraction.

This results in a standard non-cooperative game $\langle \, N, (S_i)_{i \in N} , (\pi_i)_{i \in N} \, \rangle$ where $N = \{ 1,2, \ldots ,n \}$ is the set of users and $S_i = [0,1]$ is the set of individual extraction levels. We apply a standard payoff function that considers benefits from current usage and the utility of future usage with equal weight for every $i \in N$. Hence,
\[
\pi_i (x) = x_i \, \left( 1 - \sum_{j \in N} x_j \right) = x_i \left( 1-X_N \right) \qquad \mbox{for every } x \in S = \prod_{j \in N} S_j = [0,1]^n \subset \mathbb R^n .
\]
We denote by $X_T = \sum_{i \in T} x_i$ the total extraction of a coalition $T \subseteq N$ with the convention that $X_{\varnothing} =0$. Note that $\left( 1-X_N \right)$ therefore represents the total size of the commons left for future usage.\footnote{We remark that $1-X_N < 0$ refers to the destruction of the commons, resulting in negative payoffs for all users.}

\paragraph{Nash equilibria and social optima}

We summarise the resulting Nash equilibrium\footnote{A Nash equilibrium is a strategy profile $x^{\mathrm{NE}} \in S$ in which all users do a best response to what other users extract from the commons, i.e., $\pi_i (x^{\mathrm{NE}}) = \max_{x_i \in S_i} \pi_I (x_i, x^{\mathrm{NE}}_{-i} )$.} and the unique social (Pareto) optimum\footnote{A social optimum is a strategy profile $x^{\mathrm{SO}} \in S$ that maximizes the collective welfare, i.e., $\sum_{i \in N} \pi_i (x^{\mathrm{SO}}) = \max_{x \in S} \sum_{i \in N} \pi_i (x)$. } outcomes in the following table:

\begin{center}
\begin{tabular}{|l|cc|cc|}
\hline
 & $x_i$ & $X_N$ & $\pi_i$ & $\sum \pi_i$ \\
\hline
Nash equilibrium
& $\frac{1}{n+1}$ & $\frac{n}{n+1}$ & $\frac{1}{(n+1)^2}$ & $\frac{n}{(n+1)^2}$ \\
\hline
Social optimum
& $\frac{1}{2n}$ & $\frac{1}{2}$ & $\frac{1}{4n}$ & $\frac{1}{4}$ \\
\hline
 \end{tabular}
\end{center}

\noindent
We observe here the well-known conclusion that there significant over-extraction in the Nash equilibrium, which reduces welfare uniformly for all users. Indeed, we note that $X^{\mathrm{NE}}_N > X^{\mathrm{SO}}_N$ as well as $\sum \pi^{\mathrm{NE}}_i < \sum \pi^{\mathrm{SO}}_i$ for all $n \geqslant 3$.

\paragraph{Partial cooperation in the tragedy of the commons}

Next we consider a hybrid of best response rationality and socially optimal decision-making by allowing users to collaborate to extract in a collectively rational fashion. In particular, consider that a coalition of cooperators $C \subset N$ decides collectively about their extraction rate, while the non-collaborating users $j \in N \setminus C$ act according to best response rationality. Furthermore, we assume that the coalition of cooperators $C$ assumes a Stackelberg leadership position and acts as a first-mover. Hence, users in $C$ determine their coordinated extraction rates taking into account the best responses of all non-cooperators $j \in N \setminus C$. This is referred to as a \emph{partial cooperative leadership equilibrium} (PCLE) in \citet{CGL2011,CGL2018}.

Suppose that the coalition of cooperators has size $|C|= m < n$. In a PCLE, non-cooperators $j \in N \setminus C$ optimise their payoffs given $X_C$ as well as $(x_k)_{k \in N \setminus C , \, k \neq j}$. This results for the non-cooperator $j \in N \setminus C$ in solving
\[
\max_{0 \leqslant x_j \leqslant 1} \, \pi_j \left( x_j, x^{NC},x^C \right) = x_j \left ( 1 - X_C - X_{NC} \right)
\]
where $X_C = \sum_{i \in C} x_i$ and $X_{NC} = \sum _{h \in N \setminus C} x_h$. Solving this simultaneously for all $j \in N\setminus C$, given $X_C$, we conclude that the best response for $j \in N \setminus C$ is given by
\begin{equation}
	x_j (X_C) = \frac{1-X_C}{n-m+1} \quad \mbox{and} \quad \sum_{j \in N \setminus C} x_j (X_C) =\frac{(n-m) \, \left( 1-X_C \right)}{n-m+1}
\end{equation}
Next, the coalition of cooperators $C$ collectively determine their collective extraction, given the optimal decisions of all non-cooperators, by solving
\[
\max_{x^C \in [0,1]^C} \, \sum_{i \in C} \pi_i \left( x^C, \left( x_j (X_C) \right)_{j \in N \setminus C} \right) = X_C \left( 1- X_C - \sum_{j \in N \setminus C} x_j (X_C) \, \right)
\]
This results in the conclusion that the coalition of cooperators $C$ solves
\begin{equation}
	\max_{x^c \in [0,1]^C} \, X_C \cdot \frac{1-X_C}{n-m+1}
\end{equation}
leading to the conclusion that the optimal collective extraction is $X^L_C = \frac{1}{2}$. The resulting PCLE can be summarised as
\begin{align}
	x^L_i & = x^L_C (m) = \frac{1}{2m} & \mbox{for } i \in C \\
	x^L_j & = x^L_{NC} (m) = \frac{1}{2(n-m+1)} & \mbox{for } j \in N \setminus C 
\end{align}
resulting in $X^L_C (m) = \tfrac{1}{2}$, $X^L_{NC} (m) = \tfrac{n-m}{2(n-m+1)}$ and $X^L_N (m) = 1- \tfrac{1}{2(n-m+1)}$. Clearly, the resulting PCLE outcomes form a hybrid between Nash equilibrium and social optimum extraction. The resulting PCLE payoffs reflect this as
\begin{align}
	\pi^L_i & = \pi^L_C (m) = \frac{1}{4m(n-m+1)} & \mbox{for } i \in C \\
	\pi^L_j & = \pi^L_{NC} (m) = \frac{1}{4(n-m+1)^2} & \mbox{for } j \in N \setminus C
\end{align}

\paragraph{Stable partial cooperation}

The tragedy of the commons as formulated here allows for the consideration of stability in the process of partial cooperation under leadership. We call a coalition of cooperators $C$ \emph{stable} if no non-cooperator $j \in N \setminus C$ would like to join the coalition $C$ and no member $i \in C$ would like to leave the coalition of cooperators. Note that this introduces a model of coalition formation that is founded on one-step farsightedness on part of all users in this tragedy of the commons situation. 

Formally, we say that a coalition of cooperators $C \subseteq N$ is stable if
\begin{equation}
	\pi^L_{NC} (m) \geqslant \pi^L_C (m+1) \qquad \mbox{as well as} \qquad \pi^L_C (m) \geqslant \pi^L_{NC} (m-1)
\end{equation}
After some calculus we conclude that a coalition of cooperators $C$ is stable if and only if
\[
\frac{m+1}{n-m+1} \geqslant \frac{n-m+1}{n-m} \qquad \mbox{as well as} \qquad \frac{n-m+2}{n-m+1} \geqslant \frac{m}{n-m+2}
\]
It can be verified that there are many $(n,m)$ pairs that satisfy these inequalities. In particular, this holds for $n=8$ and $m=5$. We verify this for these particular sizes of the coalition of cooperators for population size $n=8$. The next table summarises the main resulting PCLE strategies and extraction rates for all sizes $2 \leqslant m <n=8$ of the coalition of cooperators. 

\begin{center}
\begin{tabular}{|l|ccc|cc|}
\hline
$m$ & $x^L_C$ & $x^L_{NC}$ & $X_N$ & $\pi^L_C$ & $\pi^L_{NC}$
\\
\hline
2 & $\tfrac{1}{4}$ & $\tfrac{1}{14}$ & $\tfrac{13}{14}$ & $\tfrac{1}{56}$ & $\tfrac{1}{196}$
\\
3 & $\tfrac{1}{6}$ & $\tfrac{1}{12}$ & $\tfrac{11}{12}$ & $\tfrac{1}{72}$ & $\tfrac{1}{144}$
\\
4 & $\tfrac{1}{8}$ & $\tfrac{1}{10}$ & $\tfrac{9}{10}$ & $\tfrac{1}{80}$ & $\tfrac{1}{100}$
\\
$5^*$ & $\tfrac{1}{10}$ & $\tfrac{1}{8}$ & $\tfrac{7}{8}$ & $\tfrac{1}{80}$ & $\tfrac{1}{64}$
\\
6 & $\tfrac{1}{12}$ & $\tfrac{1}{6}$ & $\tfrac{5}{6}$ & $\tfrac{1}{72}$ & $\tfrac{1}{36}$
\\
7 & $\tfrac{1}{14}$ & $\tfrac{1}{4}$ & $\tfrac{3}{4}$ & $\tfrac{1}{56}$ & $\tfrac{1}{16}$
\\
\hline
\end{tabular}
\end{center}

\noindent
From the table of the computed PCLE, we conclude that indeed $m^*=5$ is a stable size of the coalition. Indeed, $\pi^L_C (5) = \tfrac{1}{80} > \pi^L_{NC} (4) = \tfrac{1}{100}$ as well as $\pi^L_{NC} (5) = \tfrac{1}{64} > \pi^L_C (6) = \tfrac{1}{72}$. 

The analysis of the PCLE in the setting of the tragedy of the commons shows that there emerges self-regulation of the extraction from the commons through partial cooperation under a limited farsighted behavioural rationality. In this paper we explore a more general class of non-cooperative games that exhibits this feature. This is developed over the following sections. 

\section{Social purpose games}

In the theory of non-cooperative games, a \emph{normal form game} is an interactive decision situation that is represented as a list $G = \left \langle {N, (S_i)_{i=1}^n, ({\pi}_i)_{i=1}^n} \right \rangle$ with
\begin{itemize}
\item $N = \{ 1, \ldots ,n \}$ is a given finite set of players, where $n \in \mathbb{N}$ is the number of players;
\item for every $i \in N$, $S_i$ is a non-empty strategy or action set for player $i$, where we define $S = \prod_{i \in N} S_i$ as the set of all strategy tuples in $G$, and;
\item for every $i \in N$, $\pi_i \colon S \to \mathbb{R}$ is the payoff function of player $i$.
\end{itemize}
A \emph{Nash equilibrium} in the game $G= \left \langle {N, (S_i)_{i=1}^n, ({\pi}_i)_{i=1}^n} \right \rangle$ is a strategy tuple $x^* \in S$ such that for every $i \in N \colon \pi_i (x^*) \geqslant \pi_i \left( x^*_{-i} , x_i  \right)$ for every alternative strategy $x_i \in S_i$, where the tuple of strategies of all players except $i$ is denoted by $x_{-i} = \left( x_1 , \ldots , x_{i-1} , x_{i+1} , \ldots , x_n \right) \in \prod_{j \neq i} S_j$.

We now introduce a social purpose game as a non-cooperative normal form game with a specific payoff structure. 

\begin{definition}
A \textbf{social purpose game} is a list $\Gamma = \langle N , \overline{Q}, H, ( \alpha_i , h_i,g_i)_{i \in N} \rangle$ that defines an associated non-cooperative game $G_\Gamma = \left \langle {N, (S_i)_{i=1}^n, ({\pi}_i)_{i=1}^n} \right \rangle$ with
\begin{itemize}
\item $N = \{ 1, \ldots ,n \}$ is a finite player set, where the number of players is $n \in \mathbb{N}$ such that $n \geqslant 2$;

\item$\overline{Q}>0$ is a positive number such that every $i \in N \colon S_i = [ \, 0,\overline{Q} \, ]$, endowed with the standard Euclidean topology;

\item $H\colon\mathbb{R} \to \mathbb{R}$ is a function, and;

\item For every $i \in N \colon \alpha_i > 0$ is a parametric weight and $h_i, g_i \colon [0,\overline{Q}] \to \mathbb{R}$ are functions such that for every $x = (x_1 , \ldots ,x_n ) \in S = [0, \overline{Q} ]^N$ player $i$'s payoffs are given by
\begin{equation}
\pi_i(x)=\alpha_i \, H \left( \sum_{j=1}^n h_j(x_j) \right) - g_i(x_i)
\end{equation}
\end{itemize}
 \end{definition}
 
 \noindent
 Throughout we refer to $\Gamma$ as a ``game'' in the same sense as a non-cooperative game defined above and use it synonymously with $G_\Gamma$.
 
All payoffs in a social purpose game depend separately on player $i$'s own strategy, through the term $g_i(x_i)$ that represents an individual cost of player $i$ of executing strategy $x_i$, and on a common term $H\Big( \sum_{i=1}^n h_i(x_i) \Big )$ that represents a common social benefit for all players. Therefore, the function $H$ is naturally interpreted as a \emph{social benefit function}, while for every player $i \in N$ the function $g_i$ represents an \emph{individual cost function}.

The functions $(h_i)_{i \in N}$ can naturally be interpreted as assessments of individual contributions in the aggregated contribution, represented as $\sum_{i \in N} h_i$. If $h_i$ is the identity function for all $i \in N$, we arrive at the standard utilitarian formulation of aggregation. On the other hand, our formulation allows for a wide range of different representations of aggregation through selections of the functions $H$ and $(h_i)_{i \in N}$. We refer to the functions $H$ and $(h_i,g_i)_{i \in N}$ as the \emph{constituting functions} of the social purpose game $\Gamma$.

Furthermore, the payoff of player $i \in N$ depends on the common benefit, weighted by the individual benefit parameter, $\alpha_i >0$, that measures the importance that player $i$ gives to the common benefit $H$ in comparison with the individual cost $g_i$. This allows for asymmetry in the assessment of the common benefit across players as individual contributors to the common benefit. For ease of the following analysis, we assume that all players in $N$ are ranked according to their preference for the generated social benefit, i.e., such that $0< \alpha_1 \leqslant \dots \leqslant \alpha_n$.

A social purpose game $\Gamma$ is clearly an aggregative separable game---applying the insights from \citet{Jensen2013}. The class of social purpose games includes the classes of various well-known aggregative games such as \emph{pollution abatement games} and, more generally, \emph{public good provision games}, which are both characterised by $\alpha_i=1$ for all players $i \in  N$.
\begin{remark}
	 We note that the tragedy of the commons situation considered in Section 2 is indeed a social purpose game. This is made explicit by noting that the payoff function of the tragedy of the commons can be written in the required form through a monotonic transformation. Indeed, taking the (natural) logarithm of her payoff function we can represent the payoff function of user $i \in N$ as $\log \pi_i (x) = \log \left( 1 - \sum_{i \in N} x_i \right) - \left(- \log x_i \right)$. Hence, this corresponds to a social purpose payoff function determined by $\alpha_i =1$, $h_i(x_i)=x_i$, $g_i (x_i) = - \log x_i$ and $H(x) = \log (1-x)$.
\end{remark}

\paragraph{Classes of social purpose games}

In the discussion of social purpose games we have not imposed any properties on the constituting functions $H$ and $(h_i,g_i)_{i \in N}$. It is natural to require these functions to satisfy certain properties. In particular, we consider the case in which $h_i$ are the identity functions, which refers to a class of social purpose games that include the public good provision games. 

This simplification introduces several interesting classes of social purpose games. For these classes of games, we are able to compare the aggregate Nash equilibrium strategies and the aggregate social optimum strategies, as well as the size of the individual contributions in the Nash equilibrium and the social optimum.

\begin{definition}
Let $\Gamma = \langle N , \overline{Q}, H, ( \alpha_i , h_i,g_i)_{i \in N} \rangle$ be a social purpose game, generating the normal form representation $G_\Gamma$.
\begin{numm}
\item The social purpose game $\Gamma$ is called \textbf{regular} if 
\begin{itemize}
\item all functions $h_i$, $i \in N$, are the identity functions $h_i (x_i) = x_i$ and all functions $H$ and $(g_i)_{i \in N}$ are continuously differentiable;
\item  the common benefit function $H$ is continuously differentiable, increasing and concave---implying that its derivative $H'$ is weakly decreasing on $X$, and;
\item the individual cost function $g_i$ is continuously differentiable, increasing and convex in $x_i$ for every $i \in N$---implying that its derivative $g'_i$ is weakly increasing for every $i \in N$.
\end{itemize}
\item The social purpose game $\Gamma$ is called \textbf{strict} if $\Gamma$ is regular and, additionally, for every $i \in N$ the individual cost function $g_i$ is twice differentiable and strictly convex---implying that $g''_i >0$.
\end{numm}
\end{definition}

\noindent
Regularity of social purpose games refers to the concavity of the payoff functions. This implies that regular social purpose games describe a situation in which players make a contribution to a common goal or resource and that these contributions are subject to weakly decreasing returns to scale, individually as well as socially. This class of social purpose games captures a very large number of relevant applications, including environmental problems, the provision of public goods and extractive situations from a common resource.

Strict social purpose games impose that the payoff structure is strictly concave. This additional property implies that one can think of these games as extractive situations from a common resource that is subject to strictly decreasing marginal returns. One can also refer to these strict social purpose games as ``extraction games'' to emphasise the most common and obvious application of this class of games. In particular, we note that the tragedy of the commons is an extraction game that can be represented as a strict social purpose game. 

\subsection{Potentials and Nash equilibria}

In the following proposition we show that each social purpose game is a weighted potential game with weights $\alpha_i$ for all players $i \in N$ \citep{MondererShapley1996}.
\begin{proposition} \label{Prop:Potential1}
Every social purpose game $\Gamma = \langle N , \overline{Q}, H, ( \alpha_i , h_i,g_i)_{i \in N} \rangle$ is a weighted potential game for weight vector $\alpha = ( \alpha_1 , \ldots , \alpha_n )$. The corresponding $\alpha$-potential is given by
\begin{equation}
	P(x)= H \left( \sum_{i=1}^n h_i(x_i) \right) - \sum_{i=1}^n \frac{1}{\alpha_i} \, g_i(x_i)
\end{equation}
\end{proposition}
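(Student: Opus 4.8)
The plan is to verify directly the defining identity of a weighted potential game in the sense of \citet{MondererShapley1996}. Recall that $P$ is an $\alpha$-potential for $\Gamma$ precisely when, for every player $i \in N$, every profile $x_{-i}$ of the opponents' strategies, and any two own strategies $a, b \in S_i = [0,\overline{Q}]$, one has
\[
\pi_i(a, x_{-i}) - \pi_i(b, x_{-i}) = \alpha_i \bigl[ P(a, x_{-i}) - P(b, x_{-i}) \bigr].
\]
So the entire claim reduces to checking this single equation for the proposed $P$.

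First I would fix $i \in N$ and $x_{-i}$, and abbreviate the frozen aggregate contribution of the other players by $\sigma_{-i} = \sum_{j \neq i} h_j(x_j)$, which does not depend on player $i$'s own choice. Computing the left-hand side from the payoff formula, the individual-cost terms $g_j$ with $j \neq i$ are identical in $\pi_i(a, x_{-i})$ and $\pi_i(b, x_{-i})$ and cancel, leaving
\[
\pi_i(a, x_{-i}) - \pi_i(b, x_{-i}) = \alpha_i \bigl[ H(h_i(a) + \sigma_{-i}) - H(h_i(b) + \sigma_{-i}) \bigr] - \bigl[ g_i(a) - g_i(b) \bigr].
\]

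Next I would compute the potential difference. Here the key structural observation is that a deviation by player $i$ affects the common term $H(\cdot)$ only through $h_i$, and affects the cost sum only through the single summand $\tfrac{1}{\alpha_i} g_i$, so that all summands $\tfrac{1}{\alpha_j} g_j$ with $j \neq i$ cancel. This gives
\[
P(a, x_{-i}) - P(b, x_{-i}) = \bigl[ H(h_i(a) + \sigma_{-i}) - H(h_i(b) + \sigma_{-i}) \bigr] - \tfrac{1}{\alpha_i} \bigl[ g_i(a) - g_i(b) \bigr].
\]
Multiplying through by $\alpha_i$ reproduces exactly the expression obtained for the payoff difference, so the defining identity holds for every $i$, $x_{-i}$, $a$ and $b$, establishing the claim.

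I expect no genuine obstacle here: the result is a direct verification rather than a deep argument. The only point requiring care is the bookkeeping of which terms survive a unilateral deviation of player $i$. The conceptual content is precisely that dividing each cost $g_i$ by its own weight $\alpha_i$ in the potential exactly compensates for the multiplicative weight $\alpha_i$ carried by the common benefit in the payoff $\pi_i$, which is why the weights in the weighted-potential identity are the $\alpha_i$ themselves.
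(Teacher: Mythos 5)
Your proof is correct and follows essentially the same route as the paper: a direct verification of the Monderer--Shapley weighted-potential identity $\pi_i(a,x_{-i})-\pi_i(b,x_{-i})=\alpha_i\bigl[P(a,x_{-i})-P(b,x_{-i})\bigr]$, where the paper adds and subtracts $\sum_{j\neq i}\tfrac{1}{\alpha_j}g_j(x_j)$ while you equivalently note that those terms cancel in the potential difference. No gaps; the argument matches the paper's proof in substance.
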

\begin{proof} 
We prove the result checking that the definition of weighted potential game holds true. For every $x_i, y_i \in X_i$ and $x_{-i}=(x_1, \dots, x_{i-1}, x_{i+1}, \dots, x_n) \in [0, \overline{Q} ]^{n-1}$ we have that:
\begin{align*}
\pi_i(y_i, x_{-i}) & -\pi_i(x_i, x_{-i})= \\[1ex]
& \alpha_iH(h_i(y_i)+ \sum_{j\ne i}h_j(x_j))-g_i(y_i)- \alpha_iH(h_i(x_i)+ \sum_{j\ne i}h_j(x_j))+g_i(x_i)=\\[1ex]
& \alpha_i \Bigl( H(h_i(y_i)+ \sum_{j\ne i}h_j(x_j))-\frac{1}{\alpha_i} g_i(y_i)- H(h_i(x_i)+ \sum_{j\ne i}h_j(x_j))+\frac{1}{\alpha_i}g_i(x_i) \Bigr)= \\[1ex]
& \alpha_i \Bigl( H(h_i(y_i)+ \sum_{j\ne i}h_j(x_j))-\frac{1}{\alpha_i} g_i(y_i)- \sum_{j \ne i}\frac{1}{\alpha_j} g_j(x_j)- H(h_i(x_i)+ \sum_{j\ne i}h_j(x_j))+ \\[1ex]
& \qquad + \frac{1}{\alpha_i}g_i(x_i)+\sum_{j \ne i}\frac{1}{\alpha_j} g_j(x_j) \Bigr)=\alpha_i \left( \, P(y_i,x_{-i})-P(x_i,x_{-i}) \, \right) .
\end{align*}
This concludes the proof of the assertion.
\end{proof}

\paragraph{Properties of Nash equilibria of social purpose games}

We say that social purpose game $\Gamma$ \emph{admits} Nash equilibria if the associated normal form game $G_\Gamma$ has Nash equilibria. Using established insights from potential game theory, we can prove quite straightforwardly the existence of Nash equilibrium for the class of social purpose games. 
\begin{proposition}\label{Prop:NE-exist} 
Consider a social purpose game $\Gamma$.
\begin{abet}
\item If all constituting functions $H$ and $(h_i,g_i)_{i \in N}$ are continuous, then $\Gamma$ admits at least one Nash equilibrium $x^{\mathrm{NE}}=(x_1^{\mathrm{NE}}, \dots, x_n^{\mathrm{NE}}) \in [0, \overline{Q} ]^N$.
\item If the social purpose game $\Gamma$ is strict such that $H$ is non-constant, then $\Gamma$ admits a unique Nash equilibrium.
\item If all constituting functions $H$ and $(h_i,g_i)_{i \in N}$ are continuously differentiable, then every interior Nash equilibrium $x^{\mathrm{NE}} = (x^{\mathrm{NE}}_1 , \ldots , x^{\mathrm{NE}}_n ) \in \left( \, 0, \overline{Q} \, \right)^N$ is a solution to the set of equations given by
\begin{equation}
	H' \left( \, \sum_{j=1}^n h_j(x_j) \right) \, h'_i (x_i) = \frac{g'_i(x_i)}{\alpha_i} \qquad \mbox{for every } i \in N.
\end{equation}
\end{abet}
\end{proposition}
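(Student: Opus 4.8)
The plan is to handle parts (a) and (b) through the weighted potential $P$ furnished by Proposition~\ref{Prop:Potential1}, and to obtain part (c) from elementary first-order conditions. For part (a), I would first observe that if $H$ and $(h_i,g_i)_{i\in N}$ are continuous then $P$ is continuous on the strategy space $S=[0,\overline{Q}]^N$, which is compact; by the Weierstrass extreme value theorem $P$ attains a maximum at some $x^{\mathrm{NE}}\in S$. It then suffices to note that any maximiser of $P$ is a Nash equilibrium: for each $i\in N$ and each $y_i\in S_i$ the weighted-potential identity gives $\pi_i(y_i,x^{\mathrm{NE}}_{-i})-\pi_i(x^{\mathrm{NE}})=\alpha_i\left(P(y_i,x^{\mathrm{NE}}_{-i})-P(x^{\mathrm{NE}})\right)\leqslant 0$, since $\alpha_i>0$ and $x^{\mathrm{NE}}$ maximises $P$. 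Hence $x^{\mathrm{NE}}$ is a Nash equilibrium and existence follows.

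Part (c) is a direct computation. At an interior equilibrium $x^{\mathrm{NE}}\in(0,\overline{Q})^N$ each coordinate $x_i^{\mathrm{NE}}$ is an interior maximiser of the differentiable map $x_i\mapsto\pi_i(x_i,x^{\mathrm{NE}}_{-i})$, so the partial derivative must vanish there. Differentiating $\pi_i(x)=\alpha_iH\left(\sum_jh_j(x_j)\right)-g_i(x_i)$ with respect to $x_i$ and setting the result to zero yields $\alpha_iH'\left(\sum_jh_j(x_j)\right)h_i'(x_i)=g_i'(x_i)$, which is exactly the asserted system after dividing by $\alpha_i>0$.

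The substantive statement is part (b). Here I would exploit that strictness forces $P$ to be strictly concave. With each $h_i$ the identity, $H\left(\sum_jx_j\right)$ is concave in $x$ (a concave $H$ composed with a linear aggregate), while each term $-\tfrac{1}{\alpha_i}g_i(x_i)$ is strictly concave because $g_i''>0$; the separable sum of these strictly concave terms is strictly concave on $S$, and adding the concave first term preserves strict concavity. A strictly concave function on the convex compact set $S$ has at most one maximiser, and part (a) guarantees one exists. (The hypothesis that $H$ be non-constant keeps the social benefit genuinely operative; the strict concavity driving uniqueness is supplied entirely by the strictly convex costs $g_i$.)

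The key step, and the main obstacle, is to show that \emph{every} Nash equilibrium—including equilibria on the boundary of $S$, which part (c) does not reach—is a global maximiser of $P$, so that the unique potential maximiser is the unique equilibrium. This I would establish using differentiability of $P$ (guaranteed by regularity) together with concavity. By the potential identity, $x^{*}$ is a Nash equilibrium precisely when, for each $i$, $x_i^{*}$ maximises $P(\,\cdot\,,x^{*}_{-i})$ over $S_i$; the variational characterisation of maximising a differentiable concave function over an interval then gives $\tfrac{\partial P}{\partial x_i}(x^{*})\,(x_i-x_i^{*})\leqslant 0$ for every $x_i\in S_i$ and every $i$. Summing these coordinatewise inequalities yields $\nabla P(x^{*})\cdot(x-x^{*})\leqslant 0$ for all $x\in S$, and the gradient inequality for concave functions, $P(x)\leqslant P(x^{*})+\nabla P(x^{*})\cdot(x-x^{*})$, then forces $P(x)\leqslant P(x^{*})$ for all $x\in S$. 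Thus any equilibrium maximises the strictly concave $P$ and is therefore unique.
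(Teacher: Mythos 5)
Your parts (a) and (c) are essentially the paper's own argument: existence via continuity of the weighted potential $P$ on the compact box $[0,\overline{Q}]^N$ plus the Monderer--Shapley fact that potential maximisers are Nash equilibria, and (c) by differentiating $\pi_i$ and setting the first-order condition to zero. Part (b) is where you genuinely diverge, and your argument is correct. The paper proves uniqueness by a one-dimensional fixed-point reduction on the aggregate: writing $s=\sum_i x_i$, it solves the first-order conditions as $x_i(s)=\bigl(g_i'\bigr)^{-1}\bigl(\alpha_i H'(s)\bigr)$, observes that $F(s)=\sum_i x_i(s)$ is continuous and decreasing on $[0,n\overline{Q}]$ (since $H'$ is weakly decreasing and $\bigl(g_i'\bigr)^{-1}$ increasing, with $H'(0)>0$), and concludes $F$ has a unique fixed point $s^*$ identifying the unique equilibrium. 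You instead note that strict convexity of each $g_i$ makes $P$ strictly concave (with $h_i$ the identity, $H\bigl(\sum_j x_j\bigr)$ is concave and the separable cost terms are strictly concave), so $P$ has exactly one maximiser, and you close the real gap---that $P^{\max}\subseteq\mathrm{NE}(\Gamma)$ is a priori only an inclusion---by the variational-inequality argument: at any equilibrium each coordinate maximises $P(\cdot,x^*_{-i})$ on $[0,\overline{Q}]$, the coordinatewise conditions $\tfrac{\partial P}{\partial x_i}(x^*)(x_i-x_i^*)\leqslant 0$ sum over the product set $S$, and the gradient inequality for concave functions forces $P(x)\leqslant P(x^*)$ globally. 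This buys two things the paper's proof does not deliver cleanly: it handles boundary equilibria without presuming the first-order conditions hold with equality (the paper's map $s\mapsto x_i(s)$ implicitly assumes interior solutions and invertibility of $g_i'$ on the relevant range), and it establishes en passant that $\mathrm{NE}(\Gamma)=P^{\max}$ for strict social purpose games, which the paper only records as a remark after Proposition \ref{Prop:Potential2}. You are also right that non-constancy of $H$ is idle in your argument, whereas the paper's construction invokes it through $H'(0)>0$; conversely, the paper's fixed-point route yields an explicit, computable aggregate equation for the equilibrium, which your more abstract concavity argument does not provide.
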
 
\begin{proof}
The hypotheses that $H$ is continuous and $h_i$ and $g_i$ are continuous for every $i \in N$, imply that the potential function $H$ is continuous on $[0, \overline{Q} ]^N$. By compactness of $[0, \overline{Q} ]^N$, there exists a maximum of $P$ and, thus, since for a weighted potential game all potential maximisers are Nash equilibria \citep{MondererShapley1996}, i.e., $\arg\max P \subseteq \mathrm{NE}_\Gamma$, where $\mathrm{NE}_\Gamma$ is the set of Nash equilibria of $G_\Gamma$, there exists at least one Nash equilibrium in $G_\Gamma$.
\\[1ex]
To show assertion (b),  for any $s= \sum_{i=1}^n x_i$ let $x_i(s)\geqslant 0$ be the  solution of $ \alpha_i H'(s)=g_i'(x_i(s))$. Since $g_i'$ is strictly monotone, we find $x_i(s)= \bigl(   g_i'   \bigr)^{-1} \bigl( \, \alpha_i H'(s) \, \bigr)$. Since $\Gamma$ is a strict social purpose game and $\bigl(   g_i'   \bigr)^{-1} $ is increasing, the  map $s\to x_i(s)$ is continuous and decreasing on $[0, n \overline{Q}]$.
\\
We define the map $F \colon s\in [0, n \overline{Q}] \mapsto \sum_{i=1}^n x_i(s)\in [0, n \overline{Q}]$, which  is continuous and decreasing on $[0, n \overline{Q}]$, since $\sum_{i=1}^n x_i(0)>0$. Furthermore, under our assumptions $H'(0)>0$, implying that $F$ has a unique fixed point $s^*\in [0, n \overline{Q}] $, i.e.,  $s^*= \sum_{i=1}^n x_i(s^*)$, corresponding to  the unique Nash equilibrium.
\\[1ex]
To show assertion (c), consider an interior Nash equilibrium $x^{\mathrm{NE}} = (x^{\mathrm{NE}}_1 , \ldots , x^{\mathrm{NE}}_n ) \in \big( \, 0, \overline{Q} \, \big)^N$. Then clearly $x^{\mathrm{NE}}$ satisfies the first order conditions of the Nash best response optimisation problem given by
\[
x^{\mathrm{NE}}_i \in {\arg \max}_{0 < x_i < \overline{Q}} \ \  \pi_i \left( x_i , x^{\mathrm{NE}}_{-i} \, \right) .
\]
Noting that
\[
\frac{\partial \pi_i}{\partial x_i} \, \pi_i (x) = \alpha_i \, H' \left( \, \sum_{j=1}^n h_j(x_j) \right) \, h'_i (x_i) - g'_i (x_i) ,
\]
we arrive immediately at the conclusion that assertion (c) is valid.
\end{proof}

\begin{remark}\label{Remark:O1} 
In the case in which the constituting functions are continuously differentiable, any interior Nash equilibrium is a solution of the system of first order conditions given by $\nabla P(x_1, \dots, x_n) =0$. This implies that every interior Nash equilibrium is a stationary point of the potential $P$. This allows the relatively easy computation of the Nash equilibria of a social purpose game.
\\
Note here that given equations in Proposition \ref{Prop:NE-exist}(c) lead to the conclusion that for all $i,j \in N \colon$
\[
\frac{g'_i(x_i^{\mathrm{NE}})}{\alpha_i h'_i(x_i^{\mathrm{NE}})} = \frac{g'_j(x_j^{\mathrm{NE}})}{\alpha_j h'_j(x_j^{\mathrm{NE}})}
\]
Hence, each of these fractions represents a \emph{shadow price} of the individual marginal benefit in terms of the common marginal benefit. \\ If $h_i$ is the identity function for all $i \in N$, this shadow price is simplified to $\rho_i = \frac{g'_i \left( \, x_i^{\mathrm{NE}} \, \right)}{\alpha_i }$. These shadow prices can be used to interpret the tensions between individual costs and benefits in social purpose game situations.
\end{remark}

\noindent
Clearly, by the definition of weighted potential game we have that $P^{\rm max} \subseteq \mathrm{NE}(\Gamma)$, where $P^{\rm max}= \arg\max_{[0, \overline{Q}]^N} P(x)$ is the set of $\alpha$-potential maximizers and $\mathrm{NE}(\Gamma)$ the set of the Nash equilibria of the game $G_\Gamma$.  For a special class of social purpose games, it is possible to characterise the set of the potential maximizers as the set of the Nash equilibria.

\begin{proposition} \label{Prop:Potential2}
Suppose that the social benefit function is linear, i.e., $H(t)=a t+b$ with $a,b\in  \mathbb{R}$. Then it holds that $P^{\rm max} = \mathrm{NE}(\Gamma)$.
\end{proposition}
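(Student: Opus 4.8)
The plan is to prove the two inclusions separately. The inclusion $P^{\rm max} \subseteq \mathrm{NE}(\Gamma)$ is already available to us: it is the generic property of weighted potential games recorded just before the statement and used in the proof of Proposition \ref{Prop:NE-exist}. Hence the entire content of the proposition is the reverse inclusion $\mathrm{NE}(\Gamma) \subseteq P^{\rm max}$, and the single structural fact that drives it is that \emph{linearity of $H$ makes the potential $P$ additively separable across players}.

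Concretely, I would first substitute $H(t)=at+b$ into the potential from Proposition \ref{Prop:Potential1} to get
\[
P(x)= a \sum_{i=1}^n h_i(x_i) + b - \sum_{i=1}^n \frac{1}{\alpha_i}\, g_i(x_i) = b + \sum_{i=1}^n \phi_i(x_i),
\]
where $\phi_i(x_i) := a\, h_i(x_i) - \tfrac{1}{\alpha_i}\, g_i(x_i)$ depends only on the single coordinate $x_i$. Because $P$ is a sum of one-variable functions over the product domain $S=[0,\overline{Q}]^N$, maximisation decouples coordinatewise, so that
\[
P^{\rm max} = \prod_{i \in N} \ \arg\max_{x_i \in [0,\overline{Q}]} \ \phi_i(x_i).
\]

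Second, I would translate the Nash condition through the potential. For any weighted potential game, a profile $x^*$ is a Nash equilibrium if and only if for every $i \in N$ the coordinate $x^*_i$ maximises $P(\,\cdot\,,x^*_{-i})$ on $[0,\overline{Q}]$; this is immediate from the defining identity $\pi_i(y_i,x_{-i})-\pi_i(x_i,x_{-i}) = \alpha_i\bigl(P(y_i,x_{-i})-P(x_i,x_{-i})\bigr)$ together with $\alpha_i>0$, which makes the best responses of $\pi_i(\,\cdot\,,x_{-i})$ and the maximisers of $P(\,\cdot\,,x_{-i})$ coincide. Combining this with separability, I would write $P(x_i,x^*_{-i}) = \phi_i(x_i) + c_i$ with $c_i = b + \sum_{j\neq i}\phi_j(x^*_j)$ independent of $x_i$, so that $\arg\max_{x_i} P(x_i,x^*_{-i}) = \arg\max_{x_i}\phi_i(x_i)$, independently of $x^*_{-i}$. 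Therefore $x^*$ is a Nash equilibrium precisely when $x^*_i \in \arg\max \phi_i$ for every $i$, i.e. $\mathrm{NE}(\Gamma) = \prod_{i\in N}\arg\max \phi_i = P^{\rm max}$.

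The argument is short and I expect no serious analytic obstacle. The one place requiring care is the decoupling step: one must verify that for a separable objective over a product of intervals the global maximisers are exactly the product of the coordinatewise maximisers, and that the coordinatewise (best-response) maximiser of $P$ genuinely does not depend on the opponents' strategies. This last independence is exactly what linearity of $H$ buys; for a general concave $H$ the cross term prevents separability, $\arg\max_{x_i}P(x_i,x_{-i})$ does depend on $x_{-i}$, and only the forward inclusion $P^{\rm max}\subseteq \mathrm{NE}(\Gamma)$ survives.
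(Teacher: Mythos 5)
Your proposal is correct, and it takes a genuinely different route from the paper's. The paper proves only the nontrivial inclusion $\mathrm{NE}(\Gamma) \subseteq P^{\rm max}$ and does so by direct aggregation: given an equilibrium $x^*$, it writes the $n$ Nash inequalities against arbitrary unilateral deviations $x_i$, divides each by $\alpha_i$, substitutes $H(t)=at+b$, and sums over $i$; linearity makes the cross terms collect into $(n-1)\,a\sum_j h_j(x_j^*)$, which cancels, leaving $P(x^*)\geqslant P(x)$ for every $x\in[0,\overline{Q}]^N$ (the deviations in the $n$ inequalities being chosen independently). You instead exploit that linearity makes the potential additively separable, $P(x)=b+\sum_i \phi_i(x_i)$ with $\phi_i(x_i)=a\,h_i(x_i)-g_i(x_i)/\alpha_i$, and characterise \emph{both} sets as the same Cartesian product $\prod_i \arg\max \phi_i$. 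Your route delivers both inclusions in one stroke without invoking the generic potential-game inclusion, and it yields strictly more structure: under linear $H$ each player's payoff $\pi_i(\cdot\,,x_{-i})$ equals $\alpha_i a\,h_i(\cdot)-g_i(\cdot)$ up to a constant in $x_i$, so every player has dominant strategies, best responses are independent of opponents, and $\mathrm{NE}(\Gamma)$ is a product set --- none of which is visible in the paper's summation argument. On the one point you flag as needing care: the set identity $\arg\max P=\prod_i\arg\max\phi_i$ holds unconditionally over the product domain (if some $\phi_i$ fails to attain its supremum, both sides are empty, which matters here since the proposition imposes no continuity on $h_i,g_i$), so there is no analytic gap. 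The paper's proof is marginally shorter given that $P^{\rm max}\subseteq\mathrm{NE}(\Gamma)$ was already recorded just before the statement; yours is the more informative argument and explains \emph{why} the equality holds, which also clarifies why it can survive for piecewise linear concave $H$ only in special cases, as in the paper's subsequent example.
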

\begin{proof} 
Under the linear assumption, the potential function is:
\[
P(x)= a \left( \sum_{i=1}^n h_i(x_i) \right) +b  - \sum_{i=1}^n \frac{g_i(x_i)}{\alpha_i}
\]
We must prove the inclusion $\mathrm{NE}(\Gamma)  \subseteq P^{\max}$. Suppose that we have a Nash equilibrium of the game $\Gamma$ denoted by $(x_1^*,...,x_n^*)$ such that for any $i \in N$ and any $x_i \in [0, \overline{Q} ] \colon$
\[
\alpha_i H \left( \sum_{i=1}^n h_i(x_i^*) \right)   -g_i(x_i^*) \geqslant    \alpha_i H  \left(h_i(x_i)  +\sum_{j\neq i} h_j(x_j^*) \right)     -g_i(x_i) 
\]
that is equivalent to 
\[
H \left( \sum_{i=1}^n h_i(x_i^*) \right)   -\frac{g_i(x_i^*)}{\alpha_i } \geqslant     H  \left(h_i(x_i) +\sum_{j\neq i} h_j(x_j^*) \right)     -\frac{g_i(x_i)}{\alpha_i }  
\]
This is under the linear assumption equivalent to 
\[
a\left( \sum_{i=1}^n h_i(x_i^*) \right) +b  -\frac{g_i(x_i^*)}{\alpha_i } \geqslant    a  \left(h_i(x_i)  +\sum_{j\neq i} h_j(x_j^*) \right)  +b   -\frac{g_i(x_i)}{\alpha_i }
\]
Summing up all the inequalities,  we conclude that for all $i \in N \colon$
\[
n a\left( \sum_{i=1}^n h_i(x_i^*) \right) + n b   - \sum_{i=1}^n \frac{g_i(x_i^*)}{\alpha_i }    \geqslant     a\left( \sum_{i=1}^n h_i(x_i) \right) + 
(n-1) a \left(\sum_{i=1}^n h_i(x_i^*) \right)  +n b  - \sum_{i=1}^n \frac{g_i(x_i)}{\alpha_i} 
\]
leading to
\[
a\left( \sum_{i=1}^n h_i(x_i^*) \right)     - \sum_{i=1}^n \frac{g_i(x_i^*)}{\alpha_i }    \geqslant     a\left( \sum_{i=1}^n h_i(x_i) \right) -
   \sum_{i=1}^n \frac{g_i(x_i)}{\alpha_i}
  \]
Taking into account that this holds for all $x \in [0, \overline{Q}]^N$, we conclude that $P(x^*) \geqslant P(x)$ for all $x$, implying that $x^*\in P^{\max}$. This shows the assertion.
 \end{proof}
 
\bigskip\noindent
Let us remark that the linearity of $H$ does not fully characterise the equality proved in Proposition \ref{Prop:Potential2}. Indeed, with reference to Example \ref{Ex:Hnl}, there exist social purpose games in which $H$ is not linear and the potential maximizers coincide with the Nash equilibria of the game. 
 
 Furthermore, we conjecture that the assertion of Proposition \ref{Prop:Potential2} holds for the larger class of social purpose games in which $H$ is concave and piecewise linear. The following example shows this for a specific case.
 
 \begin{example}
 	Consider a social purpose game $\Gamma$ with two players $n=2$ such that $\overline{Q}=1$ and a payoff structure determined by $\alpha_1 = \alpha_2=1$, $h_1(x) = h_2(x) =x$, the common payoff function given by
 	\[
 	H(x) = \left\{
 	\begin{array}{ll}
 		x \qquad & \mbox{for } x \in [0,1] \\
 		1 & \mbox{for } x>1
 	\end{array}
 	\right.
 	\]
 	and $g_1(x) = g_2 (x) =x^2$, resulting in $\pi_1 (x_1,x_2) = H( x_1+x_2) - x^2_1$ and $\pi_2 (x_1, x_2) = H( x_1+x_2) - x^2_2$.
 	\\
 	We note that the unique Nash equilibrium of $\Gamma$ is given by $\left( \tfrac{1}{2} , \tfrac{1}{2} \right)$, which coincides with the unique potential maximiser. 
 \end{example}
 
 \begin{remark}
 	We remark in relation to Proposition \ref{Prop:Potential2} that for every strict social purpose game $\Gamma$ that admits a \emph{unique} Nash equilibrium, it holds that there is a unique potential maximiser which is the unique Nash equilibrium.
 \end{remark}

\subsection{Social optima}

We say that the strategy profile $x^{\mathrm{SO}} \in [0, \overline{Q} ]^N$ is a \emph{social optimum} of the social purpose game $\Gamma = \langle N , \overline{Q}, H, ( \alpha_i , h_i,g_i)_{i \in N} \rangle$ if
\begin{equation}
	x^{\mathrm{SO}} \in \arg \max_{x \in [0, \overline{Q}]^N} \sum_{i \in N} \pi_i (x)
\end{equation}
The next proposition addresses the existence of social optima in social purpose games. For the following analysis it is useful to introduce the \emph{social welfare function} $W \colon [0, \overline{Q} \, ]^N \to \mathbb R$ where
\begin{equation}
	W (x) =  \frac{1}{A} \, \sum_{i=1}^n\pi_i (x) = H \left( \,\sum_{i=1}^n h_i(x_i) \, \right) - \frac{\sum_{i=1}^n g_i(x_i)}{A}
\end{equation}
with $A = \sum_{i \in N} \alpha_i$. Clearly, social optima maximise the function $W$ over $[0, \overline{Q} \, ]^N$. This is used in the proof of the next proposition.

\begin{proposition} \label{Prop:SO-exist}
Consider a social purpose game $\Gamma$.
\begin{abet}
	\item If the common benefit function $H$ and all individual functions $(h_i,g_i)_{i \in N}$ are continuous, then there exists a social optimum $x^{\mathrm{SO}} = \left( \, x^{\mathrm{SO}}_1, \ldots , x^{\mathrm{SO}}_n \, \right) \in  [0, \overline{Q} \, ]^N$.
\item Suppose that $\Gamma$ satisfies the following conditions:
\begin{itemize}
	\item The common benefit function $H$ is increasing and concave;
	\item For every $i \in N \colon h_i$ is concave, and;
	\item For every $i \in N \colon g_i$ is strictly convex.
\end{itemize}
Then there exists a unique social optimum $x^{\mathrm{SO}} \in  [0, \overline{Q} \, ]^N$ for $\Gamma$.
\end{abet}
\end{proposition}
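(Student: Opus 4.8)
The plan is to phrase everything in terms of the social welfare function $W$ defined just above the statement, recalling that $x^{\mathrm{SO}}$ is a social optimum precisely when it maximises $W$ over the compact convex cube $[0,\overline{Q}]^N$, since $\sum_{i\in N}\pi_i = A\cdot W$ with $A>0$.

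For part (a), I would invoke the Weierstrass extreme value theorem. Continuity of $H$ and of each $h_i$ makes $x\mapsto H\!\left(\sum_{i} h_i(x_i)\right)$ continuous as a composition, and $\sum_i g_i(x_i)$ is continuous as a finite sum; hence $W$ is continuous on the compact set $[0,\overline{Q}]^N$ and therefore attains its maximum, yielding a social optimum.

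For part (b), the aim is to strengthen this to uniqueness by proving that $W$ is \emph{strictly} concave, since a strictly concave function on a convex set admits at most one maximiser; existence is inherited from part (a) (note in particular that $H$, being concave on all of $\mathbb{R}$, is automatically continuous). I would establish strict concavity of $W$ in two independent pieces. For the cost term, strict convexity of each $g_i$ makes each $-\tfrac{1}{A}g_i$ strictly concave; because any two distinct points of $[0,\overline{Q}]^N$ differ in at least one coordinate, the separable sum $-\tfrac{1}{A}\sum_i g_i(x_i)$ inherits strict concavity. For the benefit term, the map $G(x)=\sum_i h_i(x_i)$ is concave as a sum of concave functions, and I would apply the standard fact that a nondecreasing concave function composed with a concave function is concave: concavity of $G$ gives $G(\lambda x+(1-\lambda)y)\geqslant \lambda G(x)+(1-\lambda)G(y)$, monotonicity of $H$ preserves this inequality under $H$, and concavity of $H$ bounds $H$ of the convex combination below by the convex combination of the values, so $H\circ G$ is concave. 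A concave plus a strictly concave function is strictly concave, which completes the argument.

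The step I expect to be the crux is the composition: it uses in an essential way that $H$ is \emph{both} increasing and concave---monotonicity is exactly what allows the concavity inequality for $G$ to be pushed through $H$---together with concavity of the $h_i$; weakening either hypothesis would break concavity of the benefit term. The strictness, and hence uniqueness, then rests solely on the strictly convex cost functions $g_i$, which is precisely the feature separating this case from the merely regular one.
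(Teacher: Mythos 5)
Your proof is correct and follows essentially the same route as the paper: Weierstrass on the compact cube $[0,\overline{Q}]^N$ for existence, and strict concavity of the social welfare function $W$ (concave benefit term $H\circ G$ via the increasing-concave composition rule, plus a strictly concave separable cost term) for uniqueness. You in fact spell out the concavity and composition steps that the paper's proof only asserts, and your parenthetical on existence in part (b) is a reasonable patch for a continuity hypothesis the paper itself leaves implicit.
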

\begin{proof}
Under the stated assumptions, it is obvious that the social welfare function $W$ is continuous on $[0, \overline{Q} \, ]^N$. Hence, applying the Weierstrass theorem, $W$ has a maximum on $[0, \overline{Q} \, ]^N$, which corresponds to a social optimum, showing the assertion.
\\[1ex]
To show assertion (b), note that from the imposed conditions on $H$ and $(h_i,g_i)_{i \in N}$, the constructed social welfare function $W$ is concave on $[0, \overline{Q} \, ]^N$. Moreover, since $g_i$ is strictly convex for every player $i \in  N$, it follows that the social welfare function $W$ is strictly concave. Hence, $W$ has a unique maximiser in $[0, \overline{Q} \, ]^N$, corresponding to the unique social optimum for $\Gamma$.
\end{proof} 

\begin{corollary}
The following properties hold:
\begin{abet}
	\item Every regular social purpose game admits a Nash equilibrium as well as a social optimum.
	\item Every strict social purpose game admits a unique social optimum.
\end{abet}
\end{corollary}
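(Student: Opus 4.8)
The plan is to obtain both assertions as direct consequences of the existence results already in hand, namely Proposition \ref{Prop:NE-exist} and Proposition \ref{Prop:SO-exist}. No new technical argument is required; the entire task is to verify that the defining properties of regular, respectively strict, social purpose games are precisely the hypotheses invoked in those propositions, and then to quote them. So the structure of the proof is: unpack the definition, match the hypotheses, apply the relevant existence statement.

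For assertion (a), I would first record that in a \emph{regular} social purpose game the common benefit function $H$ and each individual cost function $g_i$ are continuously differentiable by definition, while every $h_i$ is the identity function $h_i(x_i)=x_i$; in particular, all constituting functions $H$ and $(h_i,g_i)_{i \in N}$ are continuous. Continuity is exactly the hypothesis of Proposition \ref{Prop:NE-exist}(a), which therefore delivers at least one Nash equilibrium, and it is also the hypothesis of Proposition \ref{Prop:SO-exist}(a), which delivers a social optimum. Thus both objects exist for every regular game.

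For assertion (b), I would check the three conditions required by Proposition \ref{Prop:SO-exist}(b) against the definition of a \emph{strict} game. Regularity supplies that $H$ is increasing and concave; each $h_i$ is the identity, hence affine and therefore concave in the weak sense; and strictness supplies that each $g_i$ is strictly convex. All three hypotheses hold, so Proposition \ref{Prop:SO-exist}(b) yields a unique social optimum. The one point deserving a moment's attention---the closest thing here to an obstacle---is the concavity requirement on the $h_i$: one must note that an affine function such as the identity qualifies as concave under the weak convention used in Proposition \ref{Prop:SO-exist}(b), so that the strict concavity of the welfare function $W$ driving uniqueness comes entirely from the strict convexity of the $g_i$. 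Once this is observed, the argument closes immediately.
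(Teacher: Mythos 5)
Your proposal is correct and matches the paper's (implicit) derivation exactly: the corollary is stated without proof precisely because, as you verify, regularity supplies the continuity hypotheses of Propositions \ref{Prop:NE-exist}(a) and \ref{Prop:SO-exist}(a), while strictness supplies the increasing/concave $H$, concave (identity) $h_i$, and strictly convex $g_i$ required by Proposition \ref{Prop:SO-exist}(b). Your side remark that the identity is only weakly concave and that strict concavity of $W$ is driven entirely by the $g_i$ is also consistent with the paper's own proof of Proposition \ref{Prop:SO-exist}(b).
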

 
\begin{remark} \label{Remark:O2} If all constituting functions are continuously differentiable, the social optimum solves $\nabla W (x_1, \dots, x_n) =0$ implying that for every $i \in N \colon$
\[
H' \left( \, \sum_{j=1}^n h_j(x_j) \, \right) \, h'_i (x_i) = \frac{g'_i(x_i)}{A}
\]
 This compares to the shadow prices derived in Remark \ref{Remark:O1}, indicating the well-established underinvestment for the common benefit in situations like represented by social purpose games.
\end{remark}
 
\begin{remark} \label{Remark:O3} 
For every strategy tuple $x = (x_1, ... ,x_n) \in X \colon P(x) \leqslant W (x)$. 
\end{remark}

\begin{remark}\label{Remark:O4} 
The generic component of the social optimum $x_i^{\mathrm{SO}}$ does not depend on $\alpha_i$ for every $i \in N$.
\end{remark}

\paragraph{Comparing Nash equilibria and social optima in social purpose games}

For the class of social purpose games, the comparison of the Nash equilibria and social optima is usually very informative. The next example considers a very simple payoff structure in which the difference between these two concepts is maximal on the strategy set $[0, \overline{Q} ]$.

\begin{example}
	Consider a regular social purpose game $\Gamma = \langle N , \overline{Q}, H, ( \alpha_i , h_i,g_i)_{i \in N} \rangle$ with the following payoff structure:
	\begin{itemize}
		\item $H \Big( \sum_{i \in N} h_i(x_i) \,\Big) = \sum_{i \in N} x_i$ is a standard utilitarian common benefit function, where $h_i (x_i) = x_i$ is selected as the identity function for every $i \in N$
		\item $g_i (x_i) = x_i$ for every $i \in N$, and;
		\item $\alpha_i <1$ for every $i \in N$ such that $A=\sum_{i \in N} \alpha_i >1$.
	\end{itemize}
	In this example, for any $i \in N$ we can write $\pi_i (x) = \alpha_i \sum_{j \neq i} x_j + ( \alpha_i -1) x_i$ and we see that the unique Nash equilibrium is given by $x^{\mathrm{NE}}_i =0$ for all $i \in N$.
	\\
	On the other hand, the social optimum is identified by maximising the social welfare function
	\[
	\sum_{i \in N} \pi_i (x) = A \sum_{i \in N} x_i -  \sum_{i \in N} x_i  = (A-1) \,  \sum_{i \in N} x_i 
	\]
	which has a unique maximum identified as being given by $x^{\mathrm{SO}}_i = \overline{Q}$ for every $i \in N$.
\end{example}

\noindent
In regular social purpose games, the collective contributions in Nash equilibrium are always lower than the collective contributions required for a social optimum. This refers to the commonly accepted property that is exhibited in public good provision situations.

\begin{proposition}\label{Prop:NE-SO}
Suppose that $\Gamma$ is a regular social purpose game. Then for every Nash equilibrium $x^{\mathrm{NE}}$ and every social optimum $x^{\mathrm{SO}}$ of $G_\Gamma$ it holds that
\begin{equation}
\sum_{i=1}^n x_i^{\mathrm{NE}} \leqslant \sum_{i=1}^n x_i^{\mathrm{SO}}.
\end{equation}
\end{proposition}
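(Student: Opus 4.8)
The plan is to argue by contradiction, comparing the marginal (first-order) conditions that characterise a Nash equilibrium and a social optimum, and exploiting that the social planner internalises the common benefit with the \emph{total} weight $A=\sum_{j\in N}\alpha_j$ rather than with the individual weight $\alpha_i$. Since $A\geqslant \alpha_i$ for every $i$, the planner values each marginal contribution more highly, which should push aggregate effort up.

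First I would record the optimality conditions, extending the interior identities of Proposition \ref{Prop:NE-exist}(c) and Remark \ref{Remark:O2} to the boundary. As $\Gamma$ is regular, $H$ is concave and each $g_i$ is convex, so for fixed $x_{-i}$ the map $x_i\mapsto \alpha_i H(x_i+X_{-i})-g_i(x_i)$ is concave; hence a Nash equilibrium $x^{\mathrm{NE}}$ is characterised, for each $i$, by the interior condition $\alpha_i H'(X^{\mathrm{NE}})=g'_i(x_i^{\mathrm{NE}})$ together with the usual complementary-slackness inequalities at the corners ($\alpha_i H'(X^{\mathrm{NE}})\leqslant g'_i(0)$ when $x_i^{\mathrm{NE}}=0$, and $\alpha_i H'(X^{\mathrm{NE}})\geqslant g'_i(\overline{Q})$ when $x_i^{\mathrm{NE}}=\overline{Q}$), where $X^{\mathrm{NE}}=\sum_j x_j^{\mathrm{NE}}$. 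Because a social optimum maximises the concave welfare function $W$, the analogous conditions hold with $\alpha_i$ replaced by $A$: the interior condition $A\,H'(X^{\mathrm{SO}})=g'_i(x_i^{\mathrm{SO}})$ with corner inequalities in the same directions.

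Next I would assume, for contradiction, that $X^{\mathrm{NE}}>X^{\mathrm{SO}}$. Concavity of $H$ makes $H'$ weakly decreasing, so $H'(X^{\mathrm{NE}})\leqslant H'(X^{\mathrm{SO}})$; combined with $0<\alpha_i\leqslant A$ and $H'\geqslant 0$ this yields, for every $i$, the key comparison of marginal values $\alpha_i H'(X^{\mathrm{NE}})\leqslant A\,H'(X^{\mathrm{SO}})$. I would then prove componentwise that $x_i^{\mathrm{NE}}\leqslant x_i^{\mathrm{SO}}$. If $x_i^{\mathrm{NE}}>0$ and $x_i^{\mathrm{SO}}<\overline{Q}$, the optimality conditions give the chain $g'_i(x_i^{\mathrm{NE}})\leqslant \alpha_i H'(X^{\mathrm{NE}})\leqslant A\,H'(X^{\mathrm{SO}})\leqslant g'_i(x_i^{\mathrm{SO}})$, and since $g'_i$ is weakly increasing this forces $x_i^{\mathrm{NE}}\leqslant x_i^{\mathrm{SO}}$; the remaining corner configurations ($x_i^{\mathrm{NE}}=0$ or $x_i^{\mathrm{SO}}=\overline{Q}$) give the inequality immediately. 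Summing over $i\in N$ produces $X^{\mathrm{NE}}\leqslant X^{\mathrm{SO}}$, contradicting the assumption. Conceptually this is just the statement that the aggregate Nash response map $X\mapsto \sum_i (g'_i)^{-1}\!\big(\alpha_i H'(X)\big)$ lies pointwise below the aggregate optimum map $X\mapsto \sum_i (g'_i)^{-1}\!\big(A\,H'(X)\big)$, that both maps are decreasing, and hence that their respective fixed points $X^{\mathrm{NE}}$ and $X^{\mathrm{SO}}$ are ordered.

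The delicate step I expect to be the main obstacle is the componentwise comparison when $\Gamma$ is merely regular, so that $g'_i$ is only weakly increasing: the chain $g'_i(x_i^{\mathrm{NE}})\leqslant g'_i(x_i^{\mathrm{SO}})$ fails to pin down $x_i^{\mathrm{NE}}\leqslant x_i^{\mathrm{SO}}$ precisely when $g'_i$ is flat and the two marginal values coincide, i.e. $\alpha_i H'(X^{\mathrm{NE}})=A\,H'(X^{\mathrm{SO}})$. Because $\alpha_i<A$ (as $n\geqslant 2$ and all weights are positive) and $H'(X^{\mathrm{NE}})\leqslant H'(X^{\mathrm{SO}})$, this equality can hold only if $H'(X^{\mathrm{SO}})=0$; strict monotonicity of $H$ together with its concavity excludes this on $[0,n\overline{Q})$, while the edge case $X^{\mathrm{SO}}=n\overline{Q}$ trivially yields $X^{\mathrm{NE}}\leqslant X^{\mathrm{SO}}$. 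This is the one place where the argument genuinely uses that $H$ is increasing rather than merely non-decreasing, and it is where I would take the most care to keep the proof watertight.
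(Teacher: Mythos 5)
Your proof is correct and follows essentially the same route as the paper's: a contradiction argument that compares the Nash first-order condition $\alpha_i H'(X^{\mathrm{NE}})=g'_i(x_i^{\mathrm{NE}})$ with the planner's condition $A\,H'(X^{\mathrm{SO}})=g'_i(x_i^{\mathrm{SO}})$, uses $\alpha_i<A$ and the monotonicity of $H'$ and $g'_i$ to deduce $x_i^{\mathrm{NE}}\leqslant x_i^{\mathrm{SO}}$ componentwise, and sums to contradict $X^{\mathrm{NE}}>X^{\mathrm{SO}}$. If anything, your version is more complete than the paper's, which invokes the interior conditions of Remarks \ref{Remark:O1} and \ref{Remark:O2} and the positivity of $g'_i$ without comment, whereas you explicitly supply the complementary-slackness inequalities at the corners and rule out the degenerate flat-$g'_i$ case via $H'(X^{\mathrm{SO}})>0$.
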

 
\begin{proof}
Let us suppose by contradiction that $\sum_{i=1}^n x_i^{\mathrm{NE}} > \sum_{i=1}^n x_i^{\mathrm{SO}}$. Since $H'$ is decreasing in $\sum_{i=1}^n x_i$, then $H' \Big( \sum_{i=1}^n x_i^{\mathrm{NE}} \Big) \leqslant H'\Big( \sum_{i=1}^n x_i^{\mathrm{SO}} \Big)$. Since Remarks \ref{Remark:O1} and \ref{Remark:O2} hold true, it follows that for any given $i \in N \colon \frac{g'_i(x_i^{\mathrm{NE}})}{\alpha_i} \leqslant \frac{g'_i(x_i^{\mathrm{SO}})}{A}$, where $A=\sum_{j=1}^n \alpha_j >0$. 
\\
Since, for any $i \in N$, $0 < \alpha_i < A$ and $g'_i$ is positive, it follows that
\begin{equation}
\frac{g'_i(x_i^{\mathrm{NE}})}{A}<\frac{g'_i(x_i^{\mathrm{NE}})}{\alpha_i} \leqslant\frac{g'_i(x_i^{\mathrm{SO}})}{A} .
\end{equation}
Since $g'_i$ is increasing in $x_i$, we obtain that $x_i^{\mathrm{NE}} \leqslant x_i^{\mathrm{SO}}$. Hence, we have shown that $x_i^{\mathrm{NE}} \leqslant x_i^{\mathrm{SO}}$ for any $i \in N$. This, in turn, implies that $\sum_{i=1}^n x_i^{\mathrm{NE}} \leqslant \sum_{i=1}^n x_i^{\mathrm{SO}}$, contradicting the hypothesis that $\sum_{i=1}^n x_i^{\mathrm{NE}} > \sum_{i=1}^n x_i^{\mathrm{SO}}$. This contradiction proves the assertion.
\end{proof}
  
\begin{remark} \label{Remark:R1}
We can easily generalise the previous result to the case in which functions $h_i$ for all $i \in  N$ are not the identity functions, instead assuming that $\frac{g'_i(\cdot)}{h'_i(\cdot)}$ are increasing functions in $x_i$ for all players $i \in  N$. 
\\
This assumption simply means that an increase of player $i$'s strategy has a greater impact on the individual cost $g_i (x_i)$ than on the common benefit $H \left( \, \sum_i h_i(x_i) \, \right)$ through which player $i$'s contribution is assessed.  
\end{remark}

\noindent
We also note that, if we assume that the common benefit function $H$ is convex, we lose the property stated in Proposition \ref{Prop:NE-SO}.
 
\begin{example}\label{Ex:Hnl}
We consider the non-regular social purpose game $\Gamma$ defined by $n=2$, $\overline{Q} =1$ and payoff functions $\pi_1$ and $\pi_2$ given by $\alpha_1=\alpha_2=\frac{1}{2}$, $H(t)=2t^2-4t$ and $g_1=g_2=0$; then 
\begin{equation}
\pi_1(x_1,x_2)=\pi_2(x_1,x_2)= {(x_1+x_2)^2}-  2(x_1+x_2).
\end{equation}
We identify two Nash equilibria $\{(0,0), (1,1)\}$ that are also social optima. So, the statement in Proposition \ref{Prop:NE-SO} is no longer true.
\end{example}

\noindent
We next compare each component of a Nash equilibrium with the corresponding component of a social optimum.  First, we consider an example of a strict social purpose game in which there are players $i \in N$ with $x^{\mathrm{NE}}_i < x^{\mathrm{SO}}_i$ as well as players $j \in N$ with $x^{\mathrm{NE}}_j > x^{\mathrm{SO}}_j$.

\begin{example}
	Let $\delta > 0$. Consider a strict social purpose game $\Gamma_\delta$ with $N = \{ 1,2 \}$ and $\overline{Q} =1$ with
	\begin{align}
		\pi_1 (x_1,x_2) & = \alpha_1 \log (x_1+x_2 + \delta ) - x^{1+ \delta}_1 \\[1ex]
		\pi_2 (x_1,x_2) & = \alpha_2 \log (x_1+x_2 + \delta ) - x^{1+ \delta}_2
	\end{align}
	where $0< \alpha_1 <1 < \alpha_2$ with $A= \alpha_1 + \alpha_2 <2$ and assuming that $\delta > 0$ is a sufficiently small parameter, implying that the game $\Gamma_\delta$ is indeed a strict social purpose game.
	\\
	We investigate Nash equilibria as well as social optima of this class of strict social purpose games $\Gamma_\delta$ for $\delta >0$ that are sufficiently small.
	\begin{description}
		\item[Nash equilibria:]
		The game $\Gamma_\delta$ has a unique Nash equilibrium that can be determined by investigating the first order conditions for an interior solution. We derive that for $i = 1,2 \colon$
		\[
		\frac{\partial \pi_i}{\partial x_i} = 0 \quad \mbox{implies} \quad \frac{\alpha_i}{x_1+x_2 + \delta} = (1+ \delta ) x^\delta_i
		\]
		For the case that $\delta$ is sufficiently small, $\alpha_1 <1$ and $\alpha_2 >1$, it follows that there is no interior Nash equilibrium and that the only Nash equilibrium is the corner equilibrium given by $x^{\mathrm{NE}}_1 =0$ and $x^{\mathrm{NE}}_2 =1$.
		\item[Social optimum:]
		The social optimum can be determined by optimising the standard social welfare function given by
		\[
		\pi_1 (x_1, x_2) + \pi_2 (x_1, x_2)  = A \log (x_1+x_2 + \delta ) - x^{1+ \delta}_1 - x^{1+ \delta}_2
		\]
		The first order conditions for this optimum can be derived and lead to the conclusion that
		\[
		0< x^{\mathrm{SO}}_1 = x^{\mathrm{SO}}_2 < \left( \, \frac{A}{2(1+ \delta)} \, \right)^{\frac{1}{1+ \delta}} <1
		\]
		due to the property that $A <2$ and that $\delta >0$ is sufficiently small.
	\end{description}
	Hence, in this game we have derived that $x^{\mathrm{NE}}_1 =0 < x^{\mathrm{SO}}_1$ as well as $x^{\mathrm{NE}}_2 =1 > x^{\mathrm{SO}}_2$, while $x^{\mathrm{NE}}_1 + x^{\mathrm{NE}}_2 =1 < x^{\mathrm{SO}}_1 + x^{\mathrm{SO}}_2$.
\end{example}
 It may happen that  $x_i^{\mathrm{NE}} < x_i^{\mathrm{SO}}$ for any $i \in N$, as is shown in the next example.
\begin{example}\label{Ex:SIMPLE}
Consider the simple strict social purpose game $\Gamma^s$ with players $N = \{ 1, \ldots ,n \}$ and $\overline{Q} >0$ sufficiently large.\footnote{Hence, the strategic variable $x_i$ can be assumed as if taken from $\mathbb R_+$ for every $i \in N$. Given the chosen formulation of the payoff structure, it is sufficient to select $\overline{Q} = A = \sum_{i \in N} \alpha_i >0$.} For $x = (x_1 , \ldots , x_n  ) \in \mathbb R^n_+$ the payoff for player $i \in N$ is given by
\begin{equation}
	\pi_i (x) = \alpha_i \sum_{i \in N} x_i - x^2_i \qquad \mbox{with } \alpha_i >0 \quad i \in N .
\end{equation}
Clearly $\Gamma^s$ is a strict social purpose game, which unique Nash equilibrium is given by
\begin{equation}
	x^{\mathrm{NE}}_i = \tfrac{1}{2} \alpha_i \qquad \mbox{resulting in } X^{\mathrm{NE}} = \sum_{i \in N} x^{\mathrm{NE}}_i = \tfrac{A}{2} \mbox{ with } A=\sum_{i \in N} \alpha_i .
\end{equation}
The social optimum is determined as
\begin{equation}
	x^{\mathrm{SO}}_i = \frac{1}{2} A \qquad \mbox{resulting in } X^{\mathrm{SO}} = \sum_{i \in N} x^{\mathrm{SO}}_i = \tfrac{n}{2} A .
\end{equation}
Clearly, $\sum_{i=1}^n x_i^{\mathrm{NE}} \leqslant \sum_{i=1}^n x_i^{\mathrm{SO}}$ and, moreover, $x_i^{\mathrm{NE}} < x_i^{\mathrm{SO}}$ for every $i \in N$.
\\
Now, for every $i \in N$, the Nash equilibrium payoffs are determined as $\pi_i \left( x^{\mathrm{NE}} \right) = \frac{\alpha_i}{4} (2A - \alpha_i ) >0$ and the socially optimal payoffs are given by $\pi_i \left( x^{\mathrm{SO}} \right) = \frac{A}{4} (2 \alpha_i  n-A)$. Note that $\pi_i (x^{\mathrm{SO}}) >0$ if and only if $\alpha_i > \frac{A}{2n}$. Finally, $\pi_i (x^{\mathrm{NE}}) < \pi_i (x^{\mathrm{SO}})$ if and only if $\tfrac{A}{\alpha_i} - \tfrac{\alpha_i}{A} < 2(n-1)$. 
\end{example}

\section{Endogenous emergence of collaboration}

Following the literature on partial cooperation in non-cooperative games developed in \citet{CGL2011,CGL2018,CGL2020} and \citet{MallozziTijs2006,MallozziTijs2007,MallozziTijs2009}, we consider the emergence of stable partial cooperative in social purpose games. We explore particularly the notion of a partial cooperative leadership equilibrium (PCLE) for this class of games prior to developing a stability notion and showing that there indeed might emerge stable partnerships in social purpose games under the PCLE concept. Through the PCLE notion and the definition of stable coalition formation, this incorporates a limited form of farsightedness in the formation of a stable coalition of cooperators.

\subsection{Partial cooperative leadership equilibrium}

We first consider two main notions of partial cooperative equilibrium, following the theory developed in \citet{MallozziTijs2007} and \citet{CGL2011} as well as some of the insights obtained in \citet{CGL2018}.  

Within the context of a social purpose game $\Gamma$ we have assumed throughout that players are ranked by their appreciation of the common benefit, i.e., $0< \alpha_1 \leqslant \alpha_2 \leqslant \cdots \leqslant \alpha_n$. This implies that the higher ranked players have a higher propensity to prefer the common benefit component of their payoffs. Hence, the higher ranked players have an increased preference for the benefit generated through cooperation or collaboration. This leads us to conclude that any cooperation would only emerge among the higher ranked players.

Using this as a fundamental assumption in any cooperation, we denote by the variable $k \in \{ 2,\dots, n\}$ a \textit{level of cooperation}, which refers to the strategic willingness of the $k$ highest ranked players to cooperate and to select a collective strategy for common purpose.\footnote{We remark here that $k=0$ or $k=1$ are meaningless in the context of partial cooperation. Furthermore, the level of cooperation $k=n$ refers to a case of social optimality, since all players in the game act as cooperators. The latter is actually a potential valid level of cooperation.} Consequently, for any level of cooperation $k$, we denote by $C_k=\{n-k+1, \ldots, n\}$ the corresponding coalition of cooperators and by $N_k = N \setminus C_k=\{1, \ldots ,n-k\}$ the complement of $C_k$, being the corresponding set of non-cooperators.

We denote by $x^{C_k} =(x_{n-k+1}, \dots, x_n) \in \prod_{i=n-k+1}^n S_i = [0, \overline{Q} ]^{C_k}$ and $x^{N_k} =(x_{1}, \dots, x_{n-k}) \in \prod_{j=1}^{n-k} S_j = [0, \overline{Q} ]^{N_k}$. In particular, $x^{C_k}$ is collectively determined, while $x^{N_k}$ is competitively selected by individual players $j \in N_k$.

\paragraph{Equilibrium concepts under partial cooperation}

In this setting there naturally result two types of equilibrium concepts to be considered for any given level of cooperation $k \in \{ 2, \ldots ,n \}$. The first one is the \emph{partial cooperative equilibrium} concept---based on the same logic as temporal ``best response'' rationality on which the standard Nash equilibrium concept is founded---in which every non-cooperator $i \in N_k = \{ 1, \ldots , n-k \}$ acts competitively or individually, while all cooperating players $j \in C_k = \{ n-k+1, \dots, n\}$ act cooperatively as a single decision maker. Hence, in this partial cooperative equilibrium concept all decision makers in the set $\{ 1, \ldots , n-k , C_k \}$ act as Nash optimisers. We refer to \citet{CGL2011,CGL2018} for a full analysis of partial cooperative equilibria in general non-cooperative games. Here we limit our discussion to the second notion of equilibrium under partial cooperation.

Alternatively, for any given level of cooperation $k \in \{ 2, \ldots ,n \}$, we might consider the notion of \emph{partial cooperative leadership equilibrium} (PCLE). In such an equilibrium, it is assumed that the coalition of cooperators  $C_k$ acts as a single Stackelberg leader \citep{Stackelberg1934} and all non-cooperators $j \in N_k$ act as Stackelberg followers. In order to explicitly introduce this leader-follower equilibrium concept, for any $x^{C_k} \in [0,\overline{Q}]^{C_k}$ we denote by $\Gamma_{k}(x^{C_k}) = \langle {N_k ,[0,\overline{Q}]^{N_k} , \omega^{x^{C_k}}} \rangle $ the normal form game---called the \emph{conditional partial cooperative game} for $x^{C_k}$---given by player set $N_k = N \setminus C_k$ of non-cooperators whose strategy set is $S_j=[0,\overline{Q}]$ and who have a conditional payoff function $\omega_j^{x^{C_k}}\colon [0,\overline{Q}]^{N_k} \to \mathbb{R}$ defined by
\begin{equation}
	\omega_j^{x^{C_k}} \left( x^{N_k} \right) =\pi_j \left( \, x^{N_k},x^{C_k} \, \right) \qquad j \in N_k = N \setminus C_k .
\end{equation}
We denote by $\mathrm{NE} ({x^{C_k}}) \subset  [0,\overline{Q}]^{N_k}$ the set of Nash equilibria that emerge in this conditional non-cooperative game $\Gamma_{k} (x^{C_k})$. 

Note that, under the assumptions of Proposition \ref{Prop:NE-exist}, $\mathrm{NE} ({x^{C_k}}) \neq \varnothing $ but we cannot guarantee the uniqueness of the Nash equilibrium in $\Gamma_{k}(x^{C_k})$. Thus, in line with \citet{MallozziTijs2007} and \citet{CGL2011}, in order to select one Nash equilibrium among followers, we assume that cooperators are pessimistic in the sense that the coalition of the cooperators, the leader, supposes that the followers' (non-cooperators) choice is the worst for herself and select a maximin strategy. As a consequence, we introduce
\begin{equation}
	\tilde\pi \left( x^{C_k} \right) = \min\limits_{x^{N_k} \in \mathrm{NE} (x^{C_k})} \ \sum_{i \in C_k} \pi_i \left( x^{N_k},x^{C_k} \right)
\end{equation}
and 
\begin{equation}
	\widetilde{X}_k = \left\{ \left. \tilde{x}^{C_k} \in [0,\overline{Q}]^{C_k} \, \right| \, \tilde\pi(\tilde{x}^{C_k})=\max\limits_{x^C \in [0,\overline{Q}]^C} \tilde\pi(x^C) \right\} 
\end{equation}
An action tuple $(x^{N_k}_*, x^{C_k}_*) \in [0,\overline{Q}]^{N_k} \times [0,\overline{Q}]^{C_k}$ is a \textit{partial cooperative leadership equilibrium} for the game $\Gamma$ if $x^{C_k}_* \in \widetilde{X}_k$ and 
\[
x^{N_k}_* \in {\arg\min}_{x^{N_k} \in \mathrm{NE} ({x^{C_k}_*})} \ \sum_{i \in C_k} \pi_i \left( x^{N_k}, x^{C_k}_* \right).
\]
The following result states the conditions under which a PCLE exists in a social purpose game. 

\begin{proposition} \label{thm:PCLE-exist}
	For every regular social purpose game $\Gamma$ and every level of cooperation $k \in \{ 2, \ldots , n \}$ there exists at least one partial cooperative leadership equilibrium.
\end{proposition}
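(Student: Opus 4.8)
The plan is to handle the two levels of the PCLE in turn. For a fixed leader profile $x^{C_k}$, the followers play the conditional game $\Gamma_k(x^{C_k})$, which by Proposition \ref{Prop:Potential1} is a weighted potential game whose potential is the restriction $P(\,\cdot\,,x^{C_k})$; since $\Gamma$ is regular this restriction is continuous and concave on the compact box $[0,\overline{Q}]^{N_k}$. Proposition \ref{Prop:NE-exist}(a) then gives $\mathrm{NE}(x^{C_k})\neq\varnothing$, and I would sharpen this to $\mathrm{NE}(x^{C_k})=\arg\max_{x^{N_k}}P(\,\cdot\,,x^{C_k})$: a Nash profile is precisely a coordinate-wise maximiser of the concave $C^1$ potential, and summing the single-variable first-order inequalities over $j\in N_k$ produces the global variational inequality $\nabla_{x^{N_k}}P\cdot(z-x^{N_k})\leqslant 0$, which is sufficient for a global maximum. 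Hence $\mathrm{NE}(x^{C_k})$ is nonempty, compact and convex, and the correspondence $x^{C_k}\mapsto\mathrm{NE}(x^{C_k})$ has closed graph (a limit of followers' equilibria is again an equilibrium, by joint continuity of the payoffs), so it is upper hemicontinuous with compact values.

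Next I would use the aggregative form. Regularity forces each $h_i$ to be the identity and $H$ to be increasing, so the cooperators' collective payoff splits as $\sum_{i\in C_k}\pi_i(x^{N_k},x^{C_k})=A_C\,H\!\left(s_C+S_N\right)-\sum_{i\in C_k}g_i(x_i)$ with $A_C=\sum_{i\in C_k}\alpha_i$, $s_C=\sum_{i\in C_k}x_i$ and $S_N=\sum_{j\in N_k}x_j$; it depends on the followers only through the aggregate $S_N$ and is increasing in it. The inner minimum defining $\tilde\pi$ is therefore attained (a continuous function on the compact set $\mathrm{NE}(x^{C_k})$) and equals $\tilde\pi(x^{C_k})=A_C\,H\!\left(s_C+\underline s(s_C)\right)-\sum_{i\in C_k}g_i(x_i)$, where $\underline s(s_C)=\min\{S_N(x^{N_k}):x^{N_k}\in\mathrm{NE}(x^{C_k})\}$ is the smallest follower aggregate sustained in a followers' equilibrium. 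Setting $G(s)=\min\{\sum_{j}\tfrac1{\alpha_j}g_j(y_j):\sum_j y_j=s,\ 0\leqslant y_j\leqslant\overline{Q}\}$, which is convex, the set of equilibrium aggregates is exactly $\arg\max_{s}\big[H(s_C+s)-G(s)\big]$, so $\underline s$ is the least maximiser of a concave function displaying decreasing differences in $(s,s_C)$, and in particular $\underline s$ is decreasing in $s_C$.

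The main obstacle is upper semicontinuity of $\tilde\pi$, equivalently continuity of $\underline s$. Upper hemicontinuity of $\mathrm{NE}(\cdot)$ by itself yields only lower semicontinuity of the pessimistic value $\tilde\pi$ --- the wrong direction for attaining the outer maximum --- so a naive appeal to the maximum theorem fails, and this is the crux. I would resolve it by showing $\underline s$ is continuous: being monotone and (by the closed graph) lower semicontinuous it is right-continuous, and a downward jump at some $s_C^0$ would force $H(s_C^0+\cdot)$ and $G$ to agree up to an additive constant on a nondegenerate interval. There a concave function equals a convex one, so both are affine with a common slope; this affine matching persists for $s_C$ just below $s_C^0$ and reinstates maximisers of small aggregate, contradicting $\underline s(s_C)\geqslant\lim_{t\uparrow s_C^0}\underline s(t)$. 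Hence no jump occurs and $\underline s$, and with it $\tilde\pi$, is continuous on the compact set $[0,\overline{Q}]^{C_k}$.

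Finally, Weierstrass gives $\widetilde{X}_k\neq\varnothing$, and choosing any $x^{C_k}_*\in\widetilde{X}_k$ together with any minimiser $x^{N_k}_*\in\arg\min_{x^{N_k}\in\mathrm{NE}(x^{C_k}_*)}\sum_{i\in C_k}\pi_i(x^{N_k},x^{C_k}_*)$ exhibits a partial cooperative leadership equilibrium, proving the claim.
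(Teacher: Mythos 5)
Your proof is correct, but it takes a genuinely different route from the paper's. The paper establishes exactly your first paragraph---nonemptiness, compact-valuedness and upper hemicontinuity (via a closed-graph argument) of the correspondence $\mathbf{E}(x^{C_k})=\mathrm{NE}(x^{C_k})$---and then immediately delegates everything else to the proof of Theorem 2.6 in \citet{CGL2011}, invoking continuity and quasi-concavity of the payoffs. In particular, the paper never confronts the issue you correctly isolate as the crux: upper hemicontinuity of $\mathbf{E}$ makes the pessimistic value $\tilde\pi$ only \emph{lower} semicontinuous, the wrong direction for Weierstrass on the leader's outer maximisation, so existence of the maximin (weak Stackelberg) solution does not follow from Berge's theorem alone; that difficulty is absorbed into the cited external result. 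Your second and third paragraphs replace the citation with a self-contained argument exploiting the aggregative structure: the identification $\mathrm{NE}(x^{C_k})=\arg\max_{x^{N_k}}P(\cdot,x^{C_k})$ (a genuine strengthening of the paper's inclusion $\arg\max P\subseteq\mathrm{NE}$, valid here because the restricted potential is $C^1$ and concave, so summing the coordinatewise first-order inequalities yields the global variational inequality), the reduction to scalar aggregates via the convex inf-convolution $G$, monotone comparative statics for the least equilibrium aggregate $\underline s$ (decreasing differences of $H(s_C+s)$ from concavity of $H$), and the affine-matching argument ruling out left jumps of $\underline s$: a jump makes $a$ and the left limit $L$ both maximisers at $s_C^0$, concavity fills in $[a,L]$, so the concave $H(s_C^0+\cdot)$ equals the convex $G$ plus a constant there, both are affine with common slope, and this persists for $s_C$ slightly below $s_C^0$, forcing maximisers below $L$ and contradicting $\underline s(s_C)\geqslant L$; combined with right-continuity (monotone plus lsc) this gives continuity of $\underline s$, hence of $\tilde\pi$, and Weierstrass finishes. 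I checked this chain and it is sound. What each approach buys: the paper's proof is short and rides on a theorem valid for general non-cooperative games, but is only as strong as the imported result; yours is longer but self-contained, proves \emph{continuity} (not merely maximisability) of the pessimistic value, and extracts extra structure---convexity of the follower equilibrium set and the interval structure of equilibrium aggregates---that usefully foreshadows the single-valuedness later imposed in Axiom \ref{ax:uniqueness}. Two small points to tidy: treat $k=n$ (empty follower set) separately, where the PCLE is simply a maximiser of $\sum_{i\in N}\pi_i$ and exists by Proposition \ref{Prop:SO-exist}; and justify continuity of $G$ up to the endpoints of $\left[0,(n-k)\overline{Q}\right]$, e.g.\ by Berge's theorem applied to the continuous constraint correspondence $s\mapsto\bigl\{y\in[0,\overline{Q}]^{N_k}:\sum_{j}y_j=s\bigr\}$, since your closed-graph and lower-semicontinuity steps tacitly use it.
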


\begin{proof}
	Assume that $\Gamma$ is a regular social purpose game, i.e., $H$ is concave and increasing, $g_i$ are convex and continuous for all $i \in N$ and $h_i (x_i)=x_i$ for all $i \in N$. Take any level of cooperation $k \in \{ 2, \ldots , n \}$ with the corresponding coalition of cooperators $C_k = \{ n-k+1, \ldots ,n \}$.  We first show the following claim:
	\\
	\begin{claim}
		The correspondence ${\bf{E}} \colon [0,\overline{Q}]^{C_k} \to 2^{[0,\overline{Q}]^{N_k}}$ defined by
		\begin{equation}
			 x^{C_k} \in [0,\overline{Q}]^{C_k} \mapsto  {\bf{E}}(x^{C_k})= \mathrm{NE} (x^{C_k}) \subset  [0,\overline{Q}]^{N_k}
		\end{equation}
		is non-empty, compact valued and upper hemi-continuous.
	\end{claim}
	\\
	\textbf{Proof of Claim A:} Since Proposition \ref{Prop:NE-exist} holds, for any $x^{C_k} \in [0,\overline{Q}]^{C_k}$, $\mathrm{NE} (x^{C_k}) \neq \varnothing$, implying that ${\bf{E}}$ is non-empty valued. 
\\
Let $\Bigl (x_p^{C_k} \Bigr )_{p \in \mathbb{N}}$ a sequence such that $x_p^{C_k} \to x^{C_k}  \in [0,\overline{Q}]^{C_k}$. Let us take $x_p^{N \setminus C} \in \mathrm{NE} (x^{C_k}_p)$ such that $x_p^{N_k} \to x^{N_k}$ as $p \to \infty$. Since $x_p^{N_k} \in \mathrm{NE} (x^{C_k}_p)$ we have 
 \begin{align*}
 	\alpha_j H \left( \sum_{j=1}^{n-k} x_{j,p}^{N_k} + \sum_{i=n-k+1}^n x_{i,p}^{C_k} \right) & -g_j \left( x_{j,p}^{N_k} \right) \geqslant\\[1ex]
 	\alpha_j H \left( x_j^{'N_k}+ \sum^{n-k}_{l=1, l\neq j} \right. & \left. x_{l,p}^{N_k} +\sum_{i=n-k+1}^n x_{i,p}^{C_k} \right)-g_j \left( x_{j}^{'N_k} \right)
 \end{align*}
for every non-cooperator $j \in N_k$ and for any $x_{j}^{'N_k} \in [0, \overline{Q}]$.
\\
If $p \to \infty$ by continuity of all functions we obtain 
\begin{align*}
	\alpha_j H \left( \sum_{j=1}^{n-k} x_{j}^{N_k} + \sum_{i=n-k+1}^n x^{C_k}_i \right) & -g_j \left( x_{j}^{N_k} \right) \geqslant\\[1ex]
	\alpha_j H \left(  x_j^{'N_k} + \sum^{n-k}_{l=1, l\neq j} \right. & \left. x_{l}^{N_k}  + \sum_{i=n-k+1}^n x_{i}^{C_k} \right)-g_j \left( x_{j}^{'N_k} \right)
\end{align*} 
for every non-cooperator $j \in N_k$ and for any $x_{j}^{'N_k} \in [0, \overline{Q}]$. Thus, $x^{N_k} \in \mathrm{NE} (x^{C_k})$ leading to the conclusion that ${\bf{E}}$ is closed valued.
\\
Finally, the set  $\mathrm{NE} (x^{C_k})$ is  closed and compact  for any $x^{C_k}  \in [0,\overline{Q}]^{C_k}$, since $\mathrm{NE} (x^{C_k}) \subset [0, \overline{Q}]^{N_k}$ is closed as shown above and $[0, \overline{Q}]^{N_k}$ is compact.  Finally, since every closed correspondence with compact co-domain is upper hemi-continuous, this implies the assertion of the claim. \hfill $\square$
\\[1ex]
From Claim A and the fact that each payoff function $\pi_i$ for $i \in C_k$ and $\pi_j$ for $j \in N_k$ is continuous and quasi-concave, with reference to the proof of \citet[Theorem 2.6]{CGL2011}, there exists at least one PCLE in $\Gamma$. This shows the assertion of Proposition \ref{thm:PCLE-exist}.
\end{proof}

\begin{remark}\label{Remark:O5} 
Denoting by $x^{L}_* = \left( x^{N_k}_* , x^{C_k}_* \right)$ a partial cooperative leadership equilibrium at level of cooperation $k \in \{ 2, \ldots , n \}$ and given $C_k \subset N$ as the corresponding coalition of cooperators, in the same assumptions of Proposition \ref{Prop:NE-SO}, we obtain that $\sum_{i \in C_k} x_i^{\mathrm{NE}} \leqslant \sum_{i \in C_k} x_{*,i}^{L}$. 
\end{remark}
The next example shows explicitly that the mapping $\mathbf{E}$ as constructed in the proof of Proposition \ref{thm:PCLE-exist} does not have to be single-valued. 
\begin{example}\label{nu}
Consider the regular social purpose game $\Gamma$ with $n=2$, $\overline{Q} =1$ and payoff functions $\pi_1$ and $\pi_2$ given by choosing $\alpha_1=\alpha_2=1$, $H(t)=t$, $g_1=x_1, g_2=0$. Then  $\pi_1(x_1,x_2) = {x_2}$ and $\pi_2(x_1,x_2)=  x_1+x_2$.
\\
By considering the player 2 as the only cooperator, for any given $x_2$, the other player maximised $\pi_1(x_1,x_2) = {x_2}$ with respect to $x_1$. The multi-valued map ${\bf{E}}$ of Claim A assigns to every $x_2$ the whole strategy set $[0,1]$. The non-cooperative player 1 determines  her optimal strategy  $\bar x_2=1$   by solving the problem
\[
\max_{x_2}  \ \bigl[ \min_{x_1\in [0,1]}  \ x_1+x_2 \bigr]
\]
Hence, the map $\mathbf E$ is indeed multi-valued.                                     
\end{example}

\subsection{Formation of stable coalitions of cooperators}

Given the existence of a partial cooperative leadership equilibrium, the next objective is to address the following research question: is there a way to endogenously determine the number of cooperators in a partial cooperative framework?

For this, we introduce additional hypotheses to guarantee that a standard vNM stability conception \citep{vNM} can be implemented. In particular, it is required that in the game under consideration, there is a unique PCLE for every level of cooperation. We already established in Proposition \ref{thm:PCLE-exist} that every regular social purpose game admits at least one PCLE for every level of cooperation. Furthermore, for every strict social purpose game the constructed mapping $\mathbf{E}$ in the proof of Proposition \ref{thm:PCLE-exist} is a single-valued function. Therefore, we limit ourselves to the following class of social purpose games throughout the following discussion.

\begin{axiom} \label{ax:uniqueness}
	Throughout the following discussion we only consider strict social purpose games $\Gamma$ that admit a unique Nash equilibrium $x^{\mathrm{NE}}$ as well as a unique partial cooperative leadership equilibrium (PCLE) for every level of cooperation $k \in \{ 2, \ldots ,n \}$.
	\\
	For every level of cooperation $k \in \{ 2, \ldots , n \}$, representing the case that $C _k = \{ n-k+1, \ldots , n \}$ forms as the coalition of cooperators, we denote by $x^L_i (k) \in [0, \overline{Q}]$ the unique PCLE strategy for every $i \in N$ and by $\pi^L_i (k) = \pi_i \left( x^L (k) \, \right)$ the resulting PCLE payoff of player $i \in N$.
\end{axiom}

\noindent
 Under Axiom \ref{ax:uniqueness}, we use the following notion of coalition stability originally proposed for similar decision situations in \citet{Daspremont1983}, which in turn was informed by the vNM stability notion seminally stated in \citet{vNM}. This particular definition of coalitional stability simply means that no player inside the coalition has an incentive to abandon her membership of the coalition of cooperators (``internal'' stability) and no player outside the coalition of cooperators has an incentive to join the coalition of cooperators (``external'' stability).
 
 In the applied formulation below for a level of cooperation $k$, this is only expressed for the two marginal players $n-k \in N_k$ and $n-k+1 \in C_k$ where as before $C_k = \{ n-k+1, \ldots , n \}$ is the corresponding coalition of cooperators under cooperation level $k$. This is founded on the specific role of these marginal players, due to their position in the ranking based on their preference for the common benefit expressed through their respective $\alpha$-weights. This definition has already been applied in the motivating case of the tragedy of the commons as set out in Section 2. Therefore, the formalisation of this notion of stability is as follows.

\begin{definition}
	Consider a strict social purpose game $\Gamma$ that satisfies Axiom \ref{ax:uniqueness}.  We refer to the level of cooperation $k \in \{ 2, \ldots ,n \}$ as
	\begin{numm}
		\item \textbf{Internally stable} in $\Gamma$ if $\pi^L_{n-k+1} (k) \geqslant \pi^L_{n-k+1} (k-1)$, and
		\item \textbf{Externally stable} in $\Gamma$ if $k=n$ or $\pi^L_{n-k} (k) \geqslant \pi^L_{n-k} (k+1)$
	\end{numm}
	where for ease of notation we denote by $x^L (1) = x^{\mathrm{NE}}$. \\
	The level of cooperation $k^* \in \{ 2, \ldots ,n \}$ is \textbf{stable} if $k^*$ is an internally as well as externally stable level of cooperation in $\Gamma$.
\end{definition}
The next result states the necessary and sufficient conditions for stability in the formation of a stable coalition of cooperators in an arbitrary \emph{strict} social purpose game satisfying Axiom \ref{ax:uniqueness}. 

\begin{proposition} \label{prop:gen-stable}
Let $\Gamma$ be a strict social purpose game satisfying Axiom \ref{ax:uniqueness} and consider the level of cooperation $k \in \{ 2, \ldots , n \}$. Denote by $q=n-k+1 \in C_k$ the marginal cooperator and $r=n-k \in N_k$ the marginal non-cooperator.\footnote{We remark that as a consequence $C_{k-1} = C_k \setminus \{ q\}$ and $C_{k+1} = C_k \cup \{ r \}$.} Then the following properties apply:
\begin{numm}
	\item The level of cooperation $k=2$ is stable if and only if 
 		\begin{equation*}
 			\tfrac{1}{\alpha_{n-1}} \left( g_{n-1} \left( x_{n-1}^{L} (2) \, \right) -g_{n-1} \left( x_{n-1}^{\mathrm{NE}} \right) \, \right) \leqslant H \left( \sum_ {i \in N} x_i^{L}(2) \right) -H \left( \sum_{i \in N} x_i^{\mathrm{NE}} \, \right)
 		\end{equation*}
 		as well as
 		\begin{equation*}
 			\tfrac{1}{\alpha_{n-2}} \left( g_{n-2} \left( x_{n-2}^{L} (2) \, \right) -g_{n-2} \left( x_{n-2}^{L} (3) \right) \, \right) \geqslant H \left( \sum_{i \in N} x_i^{L} (2) \right)  -H \left( \sum_{i \in N} x_i^{L} (3) \, \right) .
 		\end{equation*}
	\item For $k \in \{ 3, \ldots , n-1 \}$ it holds that the level of cooperation $k$ is stable if and only if 
		\begin{equation*}
			\tfrac{1}{\alpha_{q}} \left( g_{q} \left( x_{q}^{L} (k) \, \right) -g_{q} \left( x_{q}^{L} (k-1) \right) \, \right) \leqslant H \left( \sum_{i \in N} x_i^{L} (k) \, \right) -H \left( \sum_{i \in N} x_i^{L} (k-1) \, \right)
		\end{equation*}
 		as well as
 		\begin{equation*}
 			\tfrac{1}{\alpha_{r}} \left( g_{r} \left( x_{r}^{L} (k) \, \right) -g_{r} \left( x_{r}^{L} (k+1) \right) \, \right) \geqslant H \left( \sum_{i \in N} x_i^{L} (k) \, \right) -H \left( \sum_{i \in N} x_i^{L} (k+1) \, \right) .
 		\end{equation*}
	\item The level of cooperation $k=n$ is stable if and only if 
 		\begin{equation*}
 			\tfrac{1}{\alpha_{1}} \left( g_{1} \left( x_{1}^{\mathrm{SO}} \, \right) -g_{1} \left( x_{1}^{L} (n-1) \right) \, \right) \leqslant H \left( \sum_{i \in N} x_i^{\mathrm{SO}} \, \right)  -H \left( \sum_{i \in N} x_i^{L} (n-1) \, \right) .
 		\end{equation*}
\end{numm}
\end{proposition}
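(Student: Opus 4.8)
The plan is to unfold the definition of stability directly into payoff inequalities for the two marginal players and then to substitute the explicit payoff form of a strict social purpose game. Since $\Gamma$ is strict it is in particular regular, so every $h_i$ is the identity and $\pi_i(x) = \alpha_i H\bigl(\sum_{j} x_j\bigr) - g_i(x_i)$. Axiom \ref{ax:uniqueness} guarantees that for each level of cooperation $k$ the PCLE profile $x^L(k)$, and hence each payoff $\pi_i^L(k) = \pi_i\bigl(x^L(k)\bigr)$, is uniquely determined, so every quantity below is well defined. The structural fact I will exploit is that at a fixed level $k$ the common-benefit term $H\bigl(\sum_j x_j^L(k)\bigr)$ is shared by all players; hence, when comparing one player's payoff across two adjacent levels, the weight $\alpha_i$ multiplies only the difference of the common-benefit terms, while the player-specific part is the difference of individual costs $g_i$.

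For the generic case (ii) fix $k \in \{3, \ldots, n-1\}$ with marginal cooperator $q = n-k+1$ and marginal non-cooperator $r = n-k$. Internal stability reads $\pi_q^L(k) \geqslant \pi_q^L(k-1)$. Because $C_{k-1} = C_k \setminus \{q\}$, player $q$ is a cooperator at level $k$ and a non-cooperator at level $k-1$, so I substitute $\pi_q^L(k) = \alpha_q H\bigl(\sum_i x_i^L(k)\bigr) - g_q\bigl(x_q^L(k)\bigr)$ and the analogous expression at $k-1$, collect the two $H$-terms on one side and the two $g$-terms on the other, and divide through by $\alpha_q > 0$; this yields precisely the first displayed inequality. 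External stability $\pi_r^L(k) \geqslant \pi_r^L(k+1)$ is handled identically via $C_{k+1} = C_k \cup \{r\}$ (so $r$ is a non-cooperator at level $k$ and a cooperator at level $k+1$), dividing by $\alpha_r > 0$, and yields the second displayed inequality.

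Cases (i) and (iii) are the same computation with the boundary identifications made explicit. For $k=2$ the level $k-1 = 1$ is, by the stated convention $x^L(1) = x^{\mathrm{NE}}$, the Nash equilibrium, so the internal condition follows by replacing $x_q^L(k-1)$ and $\sum_i x_i^L(k-1)$ with their Nash counterparts; external stability at $k=2$ is the generic step with $r = n-2$. For $k=n$ the coalition is all of $N$, so $x^L(n)$ maximises $\sum_{i \in N} \pi_i$ and is therefore the social optimum $x^{\mathrm{SO}}$; the internal condition for the marginal cooperator $q=1$ follows by the same rearrangement with $x_1^L(n) = x_1^{\mathrm{SO}}$, while external stability is vacuous since $N_n = \varnothing$, matching the ``$k=n$'' clause of the definition.

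I expect no genuine difficulty here: once Axiom \ref{ax:uniqueness} secures uniqueness, the whole argument is an elementary rearrangement of the payoff inequalities. The only steps that demand care are the bookkeeping of each marginal player's status---cooperator versus non-cooperator---immediately before and after the contemplated move, the correct pairing of each cost difference with the aggregate $\sum_i x_i^L(\cdot)$ at the matching level, and tracking the direction of the inequality when dividing by the (positive) weights. Getting the inequality direction right in the external-stability conditions is the single place where a sign slip is easiest to make.
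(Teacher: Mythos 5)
Your overall route is exactly the paper's: the published proof likewise writes out $u_q(C)=\pi^L_q(k)$ versus $u_q(NC)=\pi^L_q(k-1)$ and $u_r(NC)=\pi^L_r(k)$ versus $u_r(C)=\pi^L_r(k+1)$ using the payoff form $\pi_i(x)=\alpha_i H\bigl(\sum_j x_j\bigr)-g_i(x_i)$, rearranges, and divides by the positive weight, treating only $k\in\{3,\ldots,n-1\}$ and omitting the boundary cases---which you handle with the same conventions the paper relies on implicitly ($x^L(1)=x^{\mathrm{NE}}$, and at $k=n$ the grand coalition maximises $\sum_{i\in N}\pi_i$ so that $x^L(n)=x^{\mathrm{SO}}$ and external stability is vacuous since $N_n=\varnothing$). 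So far, so good.

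But the one step you yourself flagged as the danger spot is where your write-up fails: the rearrangement of external stability does \emph{not} yield the second displayed inequality. From $\pi^L_r(k)\geqslant\pi^L_r(k+1)$, i.e.\ $\alpha_r H\bigl(\sum_i x^L_i(k)\bigr)-g_r\bigl(x^L_r(k)\bigr)\geqslant\alpha_r H\bigl(\sum_i x^L_i(k+1)\bigr)-g_r\bigl(x^L_r(k+1)\bigr)$, dividing by $\alpha_r>0$ gives
\[
\tfrac{1}{\alpha_r}\left( g_r\left( x_r^{L}(k) \right)-g_r\left( x_r^{L}(k+1) \right) \right) \leqslant H\left( \sum_{i\in N} x_i^{L}(k) \right)-H\left( \sum_{i\in N} x_i^{L}(k+1) \right),
\]
the \emph{reverse} of the $\geqslant$ displayed in the proposition. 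Indeed, both stability conditions have the identical structure ``payoff at level $k$ at least payoff at the adjacent level,'' so both must rearrange to the same $\leqslant$ form; the $\geqslant$ in the stated external conditions of parts (i) and (ii) is a misprint in the proposition itself. You can confirm this against Proposition \ref{prop:simple}: in the quadratic game $\Gamma^s$ the $\leqslant$ version of the external condition reduces exactly to $A_{C_k}^2\geqslant 2k\alpha_r^2$, i.e.\ condition (\ref{eq:simple2}), whereas the $\geqslant$ version would yield its negation. The paper's own proof glosses over this with ``which is equivalent to the second condition of the assertion'' without performing the division, so you have reproduced the paper's slip rather than invented a new one; still, your assertion that the algebra ``yields precisely'' the displayed inequality is a false step as written. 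A fully correct submission would carry out the rearrangement, obtain the $\leqslant$ inequality, and note that the proposition's external conditions must have their inequality direction (or the order of the differences) reversed.
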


\begin{proof}
We only state the proof for $k \in \{ 3, \ldots ,n-1 \}$. The proofs for the two extreme cases $k \in \{ 2,n \}$ are omitted, but follow immediately from the reasoning developed below. \\
We compare the payoffs of player $q=n-k+1 \in C_k$ in the stability conditions  if she cooperates with $C_k$ or acts as a non-cooperator $q \in N_{k-1}$. Indeed, if she cooperates, she receives payoff
\begin{align*}
	u_q (C) = \pi^{L}_q (k) & = \alpha_q H \left( \sum_{i \in N} x_i^{L} (k) \right) - g_q \left( x_q^{L} (k) \, \right) \\[1ex]
	& = \alpha_q H \left( \sum_{i \in C_{k-1}} x_i^{L} (k) + x_q^{L} (k) + \sum_{j \in N_k}x_j^{L} (k) \right) - g_q \left( x_q^{L} (k) \, \right)
\end{align*}
and, if she does not cooperate with $C_k$, she would receive
\begin{align*}
	\hspace*{-1em} u_q (NC) = \pi^L_i (k-1) & = \alpha_q H \left( \sum_{i \in N} x_i^{L} (k-1) \, \right) - g_q \left( x_q^{L} (k-1) \, \right) \\[1ex]
	& = \alpha_q H \left( \sum_{i \in C_{k-1}} x_i^{L} (k-1) + x_q^{L} (k-1) + \sum_{j \in N_k}x_j^{L} (k-1) \, \right) - g_q \left( x_q^{L} (k-1) \, \right) .
\end{align*}
Internal stability requires now that $u_q (C) \geqslant u_q (NC)$. This is equivalent to the first condition in the assertion.
\\[1ex]
For the {external stability} condition of the level of cooperation $k \in \{ 3, \ldots , n-1 \}$,  we  compare the payoffs of the marginal non-cooperator $r = n-k \in N_k$ if she cooperates with $C_k$ to form $C_{k+1} = C_k \cup \{ r \}$ or acts as a non-cooperator $r \in N_k$. Indeed, if she cooperates with $C_k$ she receives payoff
\begin{align*}
	u_r (C) = \pi^L_r (k+1) & = \alpha_r H \left( \sum_{i \in N} x_i^{L} (k+1) \, \right) - g_r \left( x_r^{L} (k+1) \, \right) \\[1ex]
	& = \alpha_r H \left( \sum_{i \in C_k} x_i^{L} (k+1) + x_r^{L} (k+1)+ \sum_{j \in N_{k+1}}x_j^{L} (k+1) \, \right) - g_r \left( x_r^{L} (k+1) \, \right)
\end{align*}
and, if she does not cooperate with $C_k$ and remains in $N_k$, she receives
\begin{align*}
	u_r (NC) = \pi^L_r (k) & = \alpha_r H \left( \sum_{i \in N} x_i^{L} (k) \, \right) - g_r \left( x_r^{L} (k) \, \right) \\[1ex]
	& = \alpha_r H \left( \sum_{i \in C_k} x_i^{L} (k) + x_r^{L} (k)+ \sum_{j \in N_{k+1}}x_j^{L} (k) \, \right) - g_r \left( x_r^{L} (k) \, \right)
\end{align*}
External stability now requires that $u_r (NC) \geqslant u_r (C)$, which is equivalent to the second condition of the assertion.
\end{proof}

\paragraph{Stability in the social purpose game of Example \ref{Ex:SIMPLE}}

We determine the partial cooperative leadership equilibrium (PCLE) for any level of cooperation of the simple social purpose game $\Gamma^s$ considered in the Example \ref{Ex:SIMPLE}. Again assume that all players are ranked in order of their preference for the collectively generated benefits with $0< \alpha_1 \leqslant \alpha_2 \leqslant \cdots \leqslant \alpha_n$. In line with the previously applied notation, denote for any non-empty coalition $\varnothing \neq M \subseteq N$ by $A_M = \sum_{h \in M} \alpha_h >0$. Recall that the payoffs are given by
\[
\pi_i (x) = \alpha_i \sum_{i \in N} x_i - x^2_i \qquad \mbox{for every player } i \in N .
\]
and that $x^{\mathrm{NE}}_i = \tfrac{1}{2} \alpha_i$ and $x^{\mathrm{NE}}_i = \tfrac{1}{2} A_N$

Select any level of cooperation $k \in \{ 2, \ldots , n \}$. As before, $C_k = \{ n-k+1 , \ldots , n \} \subset N$ be the coalition of $k$ cooperators with the highest preference for the generated social benefits in this game. The non-cooperators are now the players in $N_k = N \setminus C_k = \{ 1, \ldots , n-k \}$ with the lowest preference for the collectively generated benefit.

Next, we compute the resulting unique PCLE for the level of cooperation $k$. Every non-cooperator $j \in N_k$ selects a best response to all other players' strategies. Hence,
\[
x^{L}_j (k) = \frac{1}{2} \alpha_j \qquad \mbox{for every non-cooperator } j \in N_k .
\]
Note that this best response is independent on the chosen strategy $x_{C_k}$ of the coalition of cooperators $C_k$.

The cooperators in $C_k$ determine their strategies collectively to maximise their collective payoff $\pi_{C_k} = \sum_{i \in C_k} \pi_i$, given $(x^L_j)_{j \in N_k}$. This is again independent of the chosen strategies of all players $j \in N_k$ and results into
\[
x^{L}_i (k) = \frac{1}{2} A_{C_k} = \frac{1}{2} \sum^n_{i = n-k+1} \alpha_i \qquad \mbox{for every cooperator } i \in C_k .
\]
Hence, in the partial cooperative leadership equilibrium we have that
\begin{align}
X^{L} (k) & = \sum_{h \in N} x^{L}_h (k) = \frac{1}{2} A_{N_k} + \frac{k}{2} A_{C_k} \\[1ex]
\pi^{L}_j (k) & = \alpha_j \left( \frac{1}{2} A_{N_k} + \frac{k}{2} A_{C_k} \right) - \frac{1}{4} \alpha^2_j & \mbox{for every } j \in N_k \\[1ex]
\pi^{L}_i (k) & = \alpha_i \left( \frac{1}{2} A_{N_k} + \frac{k}{2} A_{C_k} \right) - \frac{1}{4} A^2_{C_k} & \mbox{for every } i \in C_k
\end{align}
From these payoff levels we deduce that for every non-cooperator $j \in N_k \colon \pi^{\mathrm{NE}}_j < \pi^L_j (k)$ and that for every cooperator $i \in C_k$ it holds that $ \pi^{\mathrm{NE}}_i < \pi^L_i (k)$ if and only if $\tfrac{A_{C_k}}{\alpha_i} - \tfrac{\alpha_i}{A_{C_k}} < 2(k-1)$. 

We next turn to the question whether partial cooperation in this simple social purpose game $\Gamma^s$ results in a stable coalition of cooperators.

\begin{proposition} \label{prop:simple}
Under the partial cooperative leadership equilibrium concept, a level of cooperation $k \in \{ 2, \ldots , n-1 \}$ in the game $\Gamma^s$  is stable if and only if
\begin{equation} \label{eq:simple1}
\alpha_{n-k+1} \geqslant \frac{A_{C_k}}{1+ \sqrt{2(k-1)}}
\end{equation}
as well as
\begin{equation} \label{eq:simple2}
\alpha_{n-k} \leqslant \frac{A_{C_k}}{\sqrt{2k}} .
\end{equation}
Furthermore, $k=n$ is a stable level of cooperation in $\Gamma^s$ to form $C_n=N$ to implement the social optimum if and only if
\begin{equation}
	\alpha_{1} \geqslant \frac{A_{N}}{1+ \sqrt{2(n-1)}}
\end{equation}
\end{proposition}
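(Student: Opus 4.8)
The plan is to work directly from the closed-form PCLE payoffs established immediately above, namely $\pi^L_i(k) = \alpha_i X^L(k) - \tfrac14 A^2_{C_k}$ for a cooperator $i \in C_k$ and $\pi^L_j(k) = \alpha_j X^L(k) - \tfrac14 \alpha^2_j$ for a non-cooperator $j \in N_k$, where $X^L(k) = \tfrac12 A_{N_k} + \tfrac{k}{2} A_{C_k}$. The decisive bookkeeping observation is that each marginal player changes role between adjacent levels: the marginal cooperator $q=n-k+1$ of $C_k$ is the marginal non-cooperator of $N_{k-1}$, and the marginal non-cooperator $r=n-k$ of $N_k$ becomes the marginal cooperator of $C_{k+1}$. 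Hence in each comparison one of the two payoffs uses the cooperator formula and the other the non-cooperator formula, and throughout I would use the shifts $A_{C_{k-1}} = A_{C_k} - \alpha_q$, $A_{C_{k+1}} = A_{C_k} + \alpha_r$, together with $A_{N_k} + A_{C_k} = A_N$.

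For \emph{internal} stability I would expand $\pi^L_q(k) \geqslant \pi^L_q(k-1)$, treating $q$ as a cooperator at level $k$ and as a non-cooperator at level $k-1$. Substituting the shift into $X^L(k-1)$ gives $X^L(k) - X^L(k-1) = \tfrac12 A_{C_k} + \tfrac{k-2}{2}\alpha_q$, so after clearing denominators the inequality collapses to the quadratic $A^2_{C_k} - 2\alpha_q A_{C_k} - (2k-3)\alpha^2_q \leqslant 0$ in the variable $A_{C_k}$. Its roots are $\alpha_q\bigl(1 \pm \sqrt{2(k-1)}\bigr)$; since $k \geqslant 2$ makes the lower root negative while $A_{C_k} > 0$, only the upper root binds, giving $A_{C_k} \leqslant \alpha_q\bigl(1 + \sqrt{2(k-1)}\bigr)$, which rearranges to the stated condition $\alpha_{n-k+1} \geqslant A_{C_k}/\bigl(1 + \sqrt{2(k-1)}\bigr)$.

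For \emph{external} stability I would expand $\pi^L_r(k) \geqslant \pi^L_r(k+1)$, now with $r$ a non-cooperator at level $k$ and a cooperator at level $k+1$. Here the algebra is cleaner: from $X^L(k+1) - X^L(k) = \tfrac12 A_{C_k} + \tfrac{k}{2}\alpha_r$ and $A^2_{C_{k+1}} - \alpha^2_r = A^2_{C_k} + 2A_{C_k}\alpha_r$, the linear cross-terms in $A_{C_k}\alpha_r$ cancel and the inequality reduces to $A^2_{C_k} \geqslant 2k\,\alpha^2_r$, i.e. $\alpha_{n-k} \leqslant A_{C_k}/\sqrt{2k}$. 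Combining the two conditions gives the characterisation for all $k \in \{2,\ldots,n-1\}$; no separate treatment of $k=2$ is needed here because in $\Gamma^s$ all best responses are role-independent, so at the fictitious level $k=1$ every player already plays her Nash strategy and $X^L(1) = \tfrac12 A_N = X^{\mathrm{NE}}$, whence the convention $x^L(1) = x^{\mathrm{NE}}$ is consistent with the non-cooperator formula.

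Finally, for $k=n$ external stability holds vacuously by definition, so stability reduces to internal stability of the marginal cooperator $q=1$ against dropping to level $n-1$. This is exactly the internal computation specialised to $k=n$, $A_{C_n} = A_N$ and $\alpha_q = \alpha_1$, so the quadratic becomes $A^2_N - 2\alpha_1 A_N - (2n-3)\alpha^2_1 \leqslant 0$ and yields $\alpha_1 \geqslant A_N/\bigl(1 + \sqrt{2(n-1)}\bigr)$. I expect the only genuinely delicate step to be the internal case: one must perform the quadratic reduction carefully and justify discarding the negative root, whereas the external case and the case $k=n$ fall out as direct corollaries of the same manipulation.
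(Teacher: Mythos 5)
Your proposal is correct and follows essentially the same route as the paper's own proof: you compare the payoffs of the marginal cooperator $q=n-k+1$ and marginal non-cooperator $r=n-k$ across adjacent cooperation levels using the closed-form PCLE payoffs, reduce internal stability to the quadratic $A^2_{C_k} - 2\alpha_q A_{C_k} - (2k-3)\alpha^2_q \leqslant 0$ (the paper writes it as $\left( A_{C_k} - \alpha_q \right)^2 \leqslant 2(k-1)\alpha^2_q$, which is identical) and external stability to $A^2_{C_k} \geqslant 2k\,\alpha^2_r$, and handle $k=n$ as the internal condition with $A_{C_n}=A_N$. Your explicit justification for discarding the negative root and your remark that the convention $x^L(1)=x^{\mathrm{NE}}$ is consistent with the non-cooperator formula in $\Gamma^s$ are small points the paper leaves implicit, but the argument is the same.
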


\begin{proof}
Here, and in the following, for simplicity of notation in formulas, we let $q=n-k+1$ and $r=n-k$ denote  the marginal cooperator $q \in C_k$ and  the marginal non-cooperator $r \in N_k$, respectively. 
\\
 If player $q$ cooperates with $C_k$,  she receives a payoff of
\[
u_q (C) = \pi^{L}_q (k) = \alpha_q \left( \frac{1}{2} A_{N_k} + \frac{k}{2} A_{C_k} \right) - \frac{1}{4} A^2_{C_k}
\]
and if she does not cooperate with $C_k$ she receives
\[
u_q (NC) = \alpha_q \left( \frac{1}{2} A_{N_k} + \frac{1}{2} \alpha_q + \frac{k-1}{2} A_{C_k - q} \right) - \frac{1}{4} \alpha^2_q.
\]
Internal stability requires now that $u_q (C) \geqslant u_q (NC)$. This is equivalent to
\[
\frac{\alpha_q}{2} kA_{C_k} - \frac{1}{4} A^2_{C_k} \geqslant \alpha_q \left( \frac{k-1}{2} (A_{C_k} - \alpha_q) + \frac{\alpha_q}{2}\, \right) - \frac{1}{4} \alpha^2_q 
\]
or
\[
A^2_{C_k} -\alpha^2_q \leqslant 2 \alpha_q A_{C_k} +  2 k \alpha^2_q - 4 \alpha^2_q
\]
or
\[
A^2_{C_k} - 2 \alpha_q A_{C_k} + \alpha^2_q = \left( A_{C_k} - \alpha_q \right)^2 \leqslant 2(k-1) \alpha^2_q
\]
This is equivalent to
\begin{equation}
	A_{C_k} - \alpha_q \leqslant \sqrt{2(k-1)} \, \alpha_q
\end{equation}
which is equivalent to the condition (\ref{eq:simple1}) in the assertion.
\\[1ex]
 If the marginal non-cooperator $r = n-k \in N_k$ cooperates with $C_k$ to form $C_{k+1}$,  she receives payoff
\[
u_r (C) =\alpha_r \left( \frac{1}{2} (A_{N_k} - \alpha_r ) + \frac{k+1}{2} (A_{C_k} + \alpha_r) \, \right) - \frac{1}{4} \left( A_{C_k} + \alpha_r \right)^2
\]
and, if she does not cooperate and remains member of $N_k$, she receives
\[
u_r (NC) = \pi^{L}_r (k) = \alpha_r \left( \frac{1}{2} A_{N_k} + \frac{k}{2} A_{C_k} \right) - \frac{1}{4} \alpha^2_r.
\]
External stability now requires that $u_r (NC) \geqslant u_r (C)$, which is equivalent to
\[
\frac{k}{2} \alpha_r A_{C_k} - \frac{1}{4} \alpha^2_r \geqslant \alpha_r \left( \frac{k}{2} \alpha_r + \frac{k+1}{2} A_{C_k} \, \right) - \frac{1}{4} \left( A_{C_k} + \alpha_r \right)^2
\]
or
\begin{equation}
	A^2_{C_k} + 2 \alpha_r A_{C_k} \geqslant \alpha_r \left[ 2(k+1) A_{C_k} + 2k \alpha_r - 2k A_{C_k} \, \right] = \alpha_r \left( 2 A_{C_k} + 2k \alpha_r \, \right)
\end{equation}
which is equivalent to $A^2_{C_k} \geqslant 2k \alpha^2_r$ and, hence, (\ref{eq:simple2}).
\\
Finally, $k=n$ is a stable level of cooperation if and only if $u_1 (C) \geqslant u_1 (NC)$. This is equivalent to $A_N - \alpha_1 \leqslant \sqrt{2(n-1)} \alpha_1$ as derived above.
\end{proof}

\singlespace
\bibliographystyle{ecta}
\bibliography{RPDB}

\begin{thebibliography}{30}
\newcommand{\enquote}[1]{``#1''}
\expandafter\ifx\csname natexlab\endcsname\relax\def\natexlab#1{#1}\fi

\bibitem[\protect\citeauthoryear{Acemoglu and Jensen}{Acemoglu and
  Jensen}{2013}]{Jensen2013}
\textsc{Acemoglu, D. and M.~K. Jensen} (2013): \enquote{Aggregative Comparative
  Statics,} \emph{Games and {E}conomic {B}ehavior}, 81, 27--49.

\bibitem[\protect\citeauthoryear{Bergstrom, Blume, and Varian}{Bergstrom
  et~al.}{1986}]{BergstromBlumeVarian1986}
\textsc{Bergstrom, T.~C., L.~Blume, and H.~A. Varian} (1986): \enquote{On the
  Private Provison of Public Goods,} \emph{Journal of Public Economics}, 29,
  25--49.

\bibitem[\protect\citeauthoryear{Chakrabarti, Gilles, and Lazarova}{Chakrabarti
  et~al.}{2011}]{CGL2011}
\textsc{Chakrabarti, S., R.~P. Gilles, and E.~A. Lazarova} (2011):
  \enquote{Strategic Behavior under Partial Cooperation,} \emph{Theory and
  Decision}, 71, 175--193.

\bibitem[\protect\citeauthoryear{Chakrabarti, Gilles, and Lazarova}{Chakrabarti
  et~al.}{2018}]{CGL2018}
---\hspace{-.1pt}---\hspace{-.1pt}--- (2018): \enquote{Partial Cooperation in
  Strategic Decision Situations,} \emph{Theory and Decision}, 85, 455--478.

\bibitem[\protect\citeauthoryear{Chakrabarti, Gilles, and Lazarova}{Chakrabarti
  et~al.}{2021}]{CGL2020}
---\hspace{-.1pt}---\hspace{-.1pt}--- (2021): \enquote{Stability of Cartels in
  Multi-market Cournot Oligopolies,} \emph{The Manchester School}, 89, 70--85.

\bibitem[\protect\citeauthoryear{Chander and Tulkens}{Chander and
  Tulkens}{1997}]{Chander1997}
\textsc{Chander, P. and H.~Tulkens} (1997): \enquote{The core of an economy
  with multilateral environmental externalities,} \emph{International {J}ournal
  of {G}ame {T}heory}, 26, 379--401.

\bibitem[\protect\citeauthoryear{Corch\'on}{Corch\'on}{1994}]{Corchon1994}
\textsc{Corch\'on, L.~C.} (1994): \enquote{Comparative statics for aggregative
  games the strong concavity case,} \emph{Mathematical {S}ocial {S}ciences},
  28, 151--165.

\bibitem[\protect\citeauthoryear{Cornes and Hartley}{Cornes and
  Hartley}{2005}]{Cornes2005}
\textsc{Cornes, R. and R.~Hartley} (2005): \enquote{Asymmetric contests with
  general technologies,} \emph{Economic {T}heory}, 26, 923--946.

\bibitem[\protect\citeauthoryear{D'Aspremont, Jacquemin, Gabszewicz, and
  Weymark}{D'Aspremont et~al.}{1983}]{Daspremont1983}
\textsc{D'Aspremont, C., A.~Jacquemin, J.~J. Gabszewicz, and J.~A. Weymark}
  (1983): \enquote{On the Stability of Collusive Price Leadership,}
  \emph{Canadian Journal of Economics}, 16, 17--25.

\bibitem[\protect\citeauthoryear{De~Miguel and Xu}{De~Miguel and
  Xu}{2009}]{Miguel2009}
\textsc{De~Miguel, V. and H.~Xu} (2009): \enquote{A Stochastic Multiple-Leader
  Stackelberg Model: Analysis, Computation, and Application,} \emph{Operations
  Research}, 57, 1220--1235.

\bibitem[\protect\citeauthoryear{Diamantoudi and Sartzetakis}{Diamantoudi and
  Sartzetakis}{2006}]{Diamantoudi2006}
\textsc{Diamantoudi, E. and E.~S. Sartzetakis} (2006): \enquote{Stable IEA: An
  Analytical Approach,} \emph{Journal of Public Economic Theory}, 8, 247--263.

\bibitem[\protect\citeauthoryear{Dubey, Haimanko, and Zapechelnyuk}{Dubey
  et~al.}{2006}]{Dubey2006}
\textsc{Dubey, P., O.~Haimanko, and A.~Zapechelnyuk} (2006): \enquote{Strategic
  complements and substitutes, and potential games,} \emph{Games and Economic
  Behavior}, 54, 77--94.

\bibitem[\protect\citeauthoryear{Frischmann, Marciano, and Ramello}{Frischmann
  et~al.}{2019}]{Frischmann2019}
\textsc{Frischmann, B.~M., A.~Marciano, and G.~B. Ramello} (2019):
  \enquote{Tradegy of the commons after 50 years,} \emph{Journal of Economic
  Perspectives}, 33, 211--228.

\bibitem[\protect\citeauthoryear{Gilles, Pesce, and Diamantaras}{Gilles
  et~al.}{2020}]{SDL2020-JEBO}
\textsc{Gilles, R.~P., M.~Pesce, and D.~Diamantaras} (2020): \enquote{The
  Provision of Collective Goods through a Social Division of Labour,}
  \emph{Journal of Economic Behavior \& Organization}, 178, 287--312.

\bibitem[\protect\citeauthoryear{Hardin}{Hardin}{1968}]{Hardin1968}
\textsc{Hardin, G.} (1968): \enquote{The tragedy of the commons,}
  \emph{Science}, 162, 1243--1248.

\bibitem[\protect\citeauthoryear{Jensen}{Jensen}{2010}]{Jensen2010}
\textsc{Jensen, M.~K.} (2010): \enquote{Aggregative Games and Best-Reply
  Potentials,} \emph{Economic {T}heory}, 43, 45--66.

\bibitem[\protect\citeauthoryear{Kwon}{Kwon}{2006}]{Kwon2006}
\textsc{Kwon, O.~S.} (2006): \enquote{Partial International Coordination in the
  Great Fish War,} \emph{Environmental and Resource Economics}, 33, 463--483.

\bibitem[\protect\citeauthoryear{Lloyd}{Lloyd}{1833}]{Lloyd1833}
\textsc{Lloyd, W.~F.} (1833): \emph{Two Lectures on the Checks to Population:
  Delivered Before the University of Oxford, in Michaelmas Term 1832}, Oxford,
  UK: J.H. Parker Publishers.

\bibitem[\protect\citeauthoryear{Mallozzi and Messalli}{Mallozzi and
  Messalli}{2017}]{MallozziMessalli2017}
\textsc{Mallozzi, L. and R.~Messalli} (2017): \enquote{Multi-Leader
  Multi-Follower Model with Aggregative Uncertainty,} \emph{Games}, 8, 25.

\bibitem[\protect\citeauthoryear{Mallozzi and Tijs}{Mallozzi and
  Tijs}{2008}]{MallozziTijs2006}
\textsc{Mallozzi, L. and S.~Tijs} (2008): \enquote{Conflict and Cooperation in
  Symmetric Potential Games,} \emph{International Game Theory Review}, 10,
  245--256.

\bibitem[\protect\citeauthoryear{Mallozzi and Tijs}{Mallozzi and
  Tijs}{2009}]{MallozziTijs2007}
---\hspace{-.1pt}---\hspace{-.1pt}--- (2009): \enquote{Coordinating Choice in
  Partial Cooperative Equilibrium,} \emph{Economics {B}ulletin}, 29, 1--6.

\bibitem[\protect\citeauthoryear{Mallozzi and Tijs}{Mallozzi and
  Tijs}{2012}]{MallozziTijs2009}
---\hspace{-.1pt}---\hspace{-.1pt}--- (2012): \enquote{Stackelberg vs Nash
  Assumption in Partially Cooperative Games,} \emph{Czech Economic Review}, 6,
  5--13.

\bibitem[\protect\citeauthoryear{McGuinty and Milam}{McGuinty and
  Milam}{2013}]{McGuinty2013}
\textsc{McGuinty, M. and G.~Milam} (2013): \enquote{Public good provisions by
  asymmetric agents: experimental evidence,} \emph{Social Choice and Welfare},
  40, 1159--1170.

\bibitem[\protect\citeauthoryear{Monderer and Shapley}{Monderer and
  Shapley}{1996}]{MondererShapley1996}
\textsc{Monderer, D. and L.~S. Shapley} (1996): \enquote{Potential {G}ames,}
  \emph{Games and {E}conomic {B}ehavior}, 14, 124--143.

\bibitem[\protect\citeauthoryear{Olmstead}{Olmstead}{2014}]{Olmstead2014}
\textsc{Olmstead, S.~M.} (2014): \enquote{Climate change adaptation and water
  resource management: A review of the literature,} \emph{Energy Economics},
  46, 500--509.

\bibitem[\protect\citeauthoryear{Ostrom}{Ostrom}{1990}]{Ostrom1990}
\textsc{Ostrom, E.} (1990): \emph{Governing the Commons: The evolution of
  institutions for collective action}, Cambridge, UK: Cambridge University
  Press.

\bibitem[\protect\citeauthoryear{Selten}{Selten}{1970}]{Selten1970}
\textsc{Selten, R.} (1970): \enquote{Aggregations theorie des
  linear-quadratischen Modells,} in \emph{Preispolitik der
  Mehrproduktenunternehmung in der statischen Theorie}, ed. by R.~Selten,
  Springer Verlag, 113--125.

\bibitem[\protect\citeauthoryear{Vives}{Vives}{1990}]{Vives1990}
\textsc{Vives, X.} (1990): \enquote{Nash equilibrium with strategic
  complementaries,} \emph{Journal of Mathematical Economics}, 19, 305--321.

\bibitem[\protect\citeauthoryear{von Neumann and Morgenstern}{von Neumann and
  Morgenstern}{1947}]{vNM}
\textsc{von Neumann, J. and O.~Morgenstern} (1947): \emph{Theory of {G}ames and
  {E}conomic {B}ehavior}, Princeton, NJ: Princeton University Press, 3rd ed.

\bibitem[\protect\citeauthoryear{von Stackelberg}{von
  Stackelberg}{1934}]{Stackelberg1934}
\textsc{von Stackelberg, H.} (1934): \emph{Marktform und Gleichgewicht},
  Habilitation, University of Vienna, translated as \emph{Market Structure and
  Equilibrium}, {D}amien {B}azin, {L}ynn {U}rch, and {R}owland {H}ill
  (translators), {S}pringer {Ve}rlag, {B}erlin, 2011 ed.

\end{thebibliography}

\end{document}